\patchcmd\@combinedblfloats{\box\@outputbox}{\unvbox\@outputbox}{}{\errmessage{\noexpand patch failed}}
\let\OLDthebibliography\thebibliography
\renewcommand\thebibliography[1]{
  \OLDthebibliography{#1}
  \setlength{\parskip}{0pt}
  \setlength{\itemsep}{0pt plus 0.3ex}
}
\theoremstyle{definition}
       \newtheorem{defn}{Definition}[section]
       \newtheorem{rem}[defn]{Remark}
\theoremstyle{plain}
       \newtheorem{thm}[defn]{Theorem}
       \newtheorem{lem}[defn]{Lemma}
\newcommand{\ST}{\mathsf{ST}}
\newcommand{\Rat}{\mathbb{Q}}
\newcommand{\RatI}{\Rat\cap[0,1]}
\newcommand{\Games}{\mathsf{G}}
\newcommand{\Act}{\mathsf{Act}}
\newcommand{\ball}[3]{B_{#2}({#3})}
\newcommand{\id}{\mathsf{id}}
\newcommand{\Set}{\mathsf{Set}}
\newcommand{\supnorm}[1]{\lVert #1 \rVert_\infty}
\newcommand{\nbhood}[2]{U^{#1}(#2)}
\newcommand{\CF}{\mathcal{F}}
\newcommand{\qr}{\mathsf{qr}}
\newcommand{\rk}{\operatorname{\mathsf{rk}}}
\newcommand{\dif}{\,\mathrm{d}}
\newcommand{\intsuc}[2]{\textstyle{\int} #2 \dif r_{#1}}
\newcommand{\dfun}{\mathcal{D}}
\newcommand{\modf}[1]{\ALCP_{#1}}
\newcommand{\diabind}[3]{#1 \Diamond\lceil #2: #3\rceil}
\newcommand{\last}{\mathsf{last}}
\newcommand{\CI}{\mathcal{I}}
\newcommand{\CJ}{\mathcal{J}}
\newcommand{\CK}{\mathcal{K}}
\renewcommand{\theta}{\vartheta}
\newcommand{\CN}{\mathsf{N}_{\mathsf{C}}}
\newcommand{\RN}{{\mathsf{N}_{\mathsf{R}}}}
\newcommand{\ALC}{\mathcal{ALC}}
\renewcommand{\land}{\mathop{\sqcap}}
\renewcommand{\lor}{\mathop{\sqcup}}
\newcommand{\DCI}{\Delta^\CI}
\newcommand{\DCJ}{\Delta^\CJ}
\newcommand{\probably}{\textsf{P}}
\renewcommand{\Diamond}{\probably} 
\newcommand{\ALCP}{\ALC(\probably)}
\newcommand{\FOLP}{\mathsf{FO}(\probably)}
\newcommand{\ev}{\operatorname{E}}
\renewcommand{\intsuc}[2]{\ev_{r_{#1}}(#2)}
\DeclareMathOperator{\pred}{Pred}
\newcommand{\nonexpI}[1]{\pred(#1)}
\DeclareMathOperator{\cpl}{Cpl}
\newenvironment{myitemize}
{\begin{itemize}
\setlength{\itemsep}{0.1ex}
\setlength{\parsep}{0cm}
}
{\end{itemize}}
\title{A Modal Characterization Theorem for a Probabilistic Fuzzy
  Description Logic}
\author{
Paul Wild$^1$
\and
Lutz Schr\"oder$^1$\and
Dirk Pattinson$^2$\and
Barbara K\"onig$^3$
\affiliations
$^1$Friedrich-Alexander-Universit\"at Erlangen-N\"urnberg\\
$^2$Australian National University, Canberra\\
$^3$Universit\"at Duisburg-Essen
}
\begin{document}

\maketitle

\begin{abstract} The fuzzy modality \emph{probably} is interpreted
  over probabilistic type spaces by taking expected truth values. The
  arising probabilistic fuzzy description logic is invariant under
  probabilistic bisimilarity; more informatively, it is non-expansive
  wrt.\ a suitable notion of behavioural distance. In the present
  paper, we provide a characterization of the expressive power of this
  logic based on this observation: We prove a probabilistic analogue
  of the classical van Benthem theorem, which states that modal logic
  is precisely the bisimulation-invariant fragment of first-order
  logic. Specifically, we show that every formula in probabilistic
  fuzzy first-order logic that is non-expansive wrt.\ behavioural
  distance can be approximated by concepts of bounded rank in
  probabilistic fuzzy description logic.

  For a modal logic perspective on the same result,
  see~\cite{wspk:van-benthem-prob-arxiv}.
\end{abstract}

\section{Introduction}

\noindent In the representation of uncertain knowledge, one will often
wish to avoid mention of exact numerical probabilities, e.g.\ when
these are not precisely known or not relevant to the representation
task at hand -- as a typical example, a medical practitioner will
rarely name a numerical threshold for the likelihood of a diagnosis,
and instead qualify the diagnosis as, say, `suspected' or
`probable'. This has led to efforts aimed at formalizing a modality
\emph{probably}, used alternatively to modalities `with probability at
least~$p$' \cite{LarsenSkou91,HeifetzMongin01,LutzSchroder10}. Such a
formalization may be approached in a two-valued setting via
qualitative axiomatizations of likelihood
\cite{Burgess69,HalpernRabin87} or via threshold probabilities
\cite{Hamblin59,Herzig03}. In a fuzzy setting, `probably' leads a
natural life as a fuzzy modality~$\Diamond$, whose truth value just
increases as its argument becomes more probable (this modality thus
connects the otherwise well-distinguished worlds of fuzziness and
probability~\cite{LukasiewiczStraccia08}). The semantics of this
operator, first defined by Zadeh~\shortcite{Zadeh68}, interprets
$\Diamond\,\phi$ as the expected truth value of~$\phi$. It appears in
various fuzzy propositional~\cite{Hajek07,FlaminioGodo07},
modal~\cite{DesharnaisEA99,BreugelWorrell05},
fixpoint~\cite{
Kozen85,HuthKwiatkowska97}, and description
logics~\cite{SchroderPattinson11}.

In the present paper, we pin down the exact expressiveness of the
basic description logic of \emph{probably}, which we briefly refer to
as \emph{probabilistic fuzzy $\ALC$} or $\ALCP$, within a natural
ambient probabilistic fuzzy first-order logic $\FOLP$, by providing a
\emph{modal characterization theorem}. The prototype of such
characterization theorems is \emph{van Benthem's
  theorem}~\shortcite{BenthemThesis}, which states that (classical)
modal logic is precisely the bisimulation-invariant fragment of
first-order logic. It has been noted that in systems with numerical
values, \emph{behavioural pseudometrics} offer a more fine-grained
measure of equivalence than two-valued
bisimilarity~\cite{GiacaloneEA90,DesharnaisEA99,BreugelWorrell05,DesharnaisEA08,bbkk:behavioral-metrics-functor-lifting}.
When 
propositional connectives are equipped with Zadeh semantics, $\ALCP$
is \emph{non-expansive} wrt.\ behavioural distance; we continue to
refer to this property as \emph{bisimulation invariance}. In previous
work~\cite{WildEA18} we have shown that \emph{relational} fuzzy modal
logic is the bisimulation-invariant fragment of fuzzy FOL, more
precisely that every bisimulation-invariant fuzzy FO formula can be
approximated by fuzzy modal formulae \emph{of bounded rank}. The bound
on the rank is key; without it, the statement turns into a form of the
(much simpler) Hennessy-Milner theorem~\cite{HennessyMilner85} (which
classically states that non-bisimilar states in finitely branching
systems can be distinguished by modal formulae), and indeed does not
need to assume FO definability of the given bisimulation-invariant
property~\cite{BreugelWorrell05}. Here, we establish a corresponding
result for the rather more involved probabilistic setting: We show
that \emph{every bisimulation-invariant formula in probabilistic fuzzy
  FOL can be approximated in bounded rank in probabilistic fuzzy
  $\ALC$.}  This means not only that, up to approximation, $\ALCP$ is
as powerful as $\FOLP$ on bisimulation-invariant properties, but also
that $\ALCP$ provides effective syntax for bisimulation-invariant
$\FOLP$, which $\FOLP$ itself does not~\cite{Otto06}.

Proofs are mostly omitted or only sketched; 
full proofs are in the appendix.

\paragraph{Related Work} There is widespread interest in modal
characterization theorems in modal
logic~\cite{DawarOtto05}, 
database theory~\cite{FigueiraEA15}, 
concurrency~\cite{JaninWalukiewicz95,Carreiro15}, and 
AI~\cite{SturmWolter01,WildSchroder17,WildEA18}. The
overall 
structure of our proof builds partly on that of our modal
characterization theorem for relational fuzzy modal
logic~\cite{WildEA18} (in turn based ultimately on a strategy due
to Otto~\shortcite{o:van-Benthem-Rosen-elementary}) but deals with a
much more involved logic, which instead of just the lattice structure
of the unit interval involves its full arithmetic structure, via the
use of probabilities and expected values, necessitating, e.g., the use
of Kantorovich-Rubinstein duality. Notable contributions of our proof
include new forms of probabilistic bisimulation games up-to-$\epsilon$
(different from games introduced by Desharnais et
al.~\shortcite{DesharnaisEA08}, which characterize a different metric)
and Ehrenfeucht-Fra\"iss\'e games, related to two-valued games
considered in the context of topological
FOL~\cite{MakowskyZiegler80}. (For lack of space, we omit discussion
of quantitative Hennessy-Milner type results beyond the mentioned
result by van Breugel and Worrell~\shortcite{BreugelWorrell05}.)

$\FOLP$ may be seen as a fuzzy variant of
Halpern's~\shortcite{Halpern90} type-1 (i.e.\ statistical) two-valued
probabilistic FOL, and uses a syntax related to coalgebraic predicate
logic~\cite{LitakEA18} and, ultimately, Chang's \emph{modal predicate
  logic}~\cite{Chang73}. Van-Benthem style theorems for two-valued
coalgebraic modal logic~\cite{SchroderEA17} instantiate to two-valued
probabilistic modal logic, then establishing expressibility of
bisimulation-invariant probabilistic FO formulae by probabilistic
modal formulae with infinite conjunction but of bounded rank, in an
apparent analogy to bounded-rank approximation in the fuzzy setting.


\section{Fuzzy Probabilistic Logics}\label{sec:logics}

We proceed to introduce the logics featuring in our main result.  We
fix (w.l.o.g., finite) sets~$\CN$ of \emph{atomic concepts} and~$\RN$
of \emph{roles}; \emph{concepts} $C,D$ of \emph{quantitative
  probabilistic $\ALC$} ($\ALCP$) are defined by the grammar
\begin{equation*}
C,D::= q\mid A\mid C\ominus q\mid \neg C\mid C\land D
\mid \Diamond\,r.\, C
\end{equation*}
where $q\in\RatI$, $A\in\CN$ and $r\in\RN$. The intended reading
of~$\Diamond$ is `probably'; we give examples below. Slightly
deviating from standard practice, we define the \emph{rank}~$\rk(C)$
of a concept~$C$ as the maximal nesting depth of the~$\Diamond$
\emph{and atomic concepts} in~$C$; e.g.\
$\rk((\Diamond\,r.\,\Diamond\,s.\, A)\land(\Diamond\, r.\, B))=3$. We
denote the set of all concepts of rank at most~$n$ by~$\modf{n}$.

Concepts are interpreted over probabilistic structures to which we
neutrally refer as \emph{interpretations} or, briefly,
\emph{models}. We allow infinite models but restrict to discrete
probability distributions over successors at each state. A model
\begin{equation*}
\CI = (\DCI,(A^\CI)_{A\in\CN},(r^\CI)_{r\in\RN})
\end{equation*}
consists of a \emph{domain} $\DCI$ of \emph{states} or
\emph{individuals}, and interpretations $A^\CI\colon \DCI\to[0,1]$,
$r^\CI\colon \DCI\times\DCI\to[0,1]$ of atomic concepts~$A$ and
roles~$r$ such that for each $a\in\DCI$, the map
\[ r_a\colon \DCI\to[0,1], \quad r_a(a') = r^\CI(a,a') \]
is either zero or a probability mass function on $\DCI$, i.e.
\[\textstyle\sum_{a'\in\DCI}r_a(a') \in \{0,1\}\]
(implying that the \emph{support} $\{a'\in \DCI\mid r_a(a')>0\}$
of~$r_a$ is at most countable). We call a state~$a$
\emph{$r$-blocking} if $\sum_{a'\in \DCI}r_a(a') = 0$. At non-blocking
states~$a$, $r_a$ thus acts as a probabilistic accessibility relation;
we abuse~$r_a$ to denote also the probability measure defined
by~$r_a$.

The interpretation $C^\CI\colon \DCI\to[0,1]$ of concepts is defined
recursively, extending that of atomic concepts, by
\begin{align*}
q^\CI(a) &= q \\
(C\ominus q)^\CI(a) &= \max(C^\CI(a)-q,0) \\
(\neg C)^\CI(a) &= 1-C^\CI(a) \\
(C\land D)^\CI(a) & = \min(C^\CI(a),D^\CI(a))\\
(\Diamond\,r.\, C)^\CI(a) &= \ev_{r_a}(C^\CI) =
\textstyle\sum_{a'\in\DCI} r_a(a') \cdot C^\CI(a')
\end{align*}
At non-blocking~$a$, $(\Diamond\,r.\, C)^\CI(a)$ is thus the expected
truth value of~$C$ for a random $r$-successor of~$a$.  We define
disjunction~$\lor$ as the dual of~$\land$ as usual, so~$\lor$ takes
maxima. We use Zadeh semantics for the propositional operators, which
will later ensure non-expansiveness wrt.\ behavioural distance; see
additional comments in Section~\ref{sec:concl}.

Up to minor variations, our models correspond to Markov chains or, in
an epistemic reading, \emph{type spaces}
(e.g.~\cite{HeifetzMongin01}). The logic $\ALCP$ was \mbox{considered (with}
\L{}uk\-as\-iew\-icz semantics) by Schröder and
Pattinson~\shortcite{SchroderPattinson11}, and resembles van Breugel
and Worrell's quantitative probabilistic modal
logic~\shortcite{BreugelWorrell05}. E.g., in a reading of~$\Delta^\CI$
as consisting of real-world individuals, the concept
\begin{equation*}
  \mathsf{Loud}\land\Diamond\,\mathsf{hasSource}.\,(\mathsf{Large}\land
  \Diamond\,\mathsf{hasMood}.\,\mathsf{Angry})
\end{equation*}
describes noises you hear in your tent at night as being loud and
probably coming from the large and probably angry animal whose shadow
just crossed the tent roof. (In this view,~$\Diamond$ can be usefully
combined with crisp or fuzzy relational modalities, using
off-the-shelf compositionality mechanisms \cite{SchroderPattinson11}.)
In an epistemic reading where the elements of~$\Delta^\CI$ are
possible worlds, and the roles are understood as epistemic agents, the
concept
\begin{equation*}
  \neg\mathsf{GoodHand}\land\hspace{1pt} \Diamond\,\mathsf{player}.\,\Diamond\,\mathsf{opponent}.\,\mathsf{GoodHand}
\end{equation*}
denotes the degree to which $\mathsf{player}$ believes she is
successfully bluffing by letting $\mathsf{opponent}$ overestimate
$\mathsf{player}$'s hand.

For readability, we will restrict the technical treatment to a single
role~$r$, omitted in the syntax, from now on, noting that covering
multiple roles amounts to no more than additional indexing. As the
first-order correspondence language of quantitative probabilistic
$\ALC$ we introduce \emph{quantitative probabilistic first-order
  logic} ($\FOLP$), with \emph{formulae} $\phi,\psi,\dots$ defined by
the grammar
\begin{multline*}
\phi,\psi::= q\mid A(x)\mid x=y\mid\phi\ominus q\mid 
\neg\phi\mid\phi\land\psi
\mid \exists x.\,\phi\\
\mid \diabind{x}{y}{\phi}\qquad(q\in\RatI, A\in\CN)
\end{multline*}
where~$x$ and~$y$ range over a fixed countably infinite reservoir of
\emph{variables}. The reading of $\diabind{x}{y}{\phi}$ is the
expected truth value of~$\phi$ at a random successor~$y$ of~$x$.  (In
particular, when~$\phi$ is crisp, then $\diabind{x}{y}{\phi}$ is just
the probability of~$y$ satisfying~$\phi$, similar to the weights
$w_y(\phi)$ in Halpern's type-1 probabilistic
FOL~\shortcite{Halpern90}.) We have the expected notions of free and
bound variables, under the additional proviso that~$y$ (but not~$x$!)
is bound in $\diabind{x}{y}{\phi}$. The \emph{(quantifier) rank}
$\qr(\phi)$ of a formula~$\phi$ is the maximal nesting depth of the
variable-binding operators~$\exists$ and~$\Diamond$ and propositional
atoms~$A$ in~$\phi$; e.g.~$\exists x.\,\diabind{x}{y}{A(y)}$ has
rank~$3$.

Given a model $\CI = (\DCI,(A^\CI)_{A\in\CN},r^\CI)$ and a vector
$\bar a=(a_1,\dots,a_n)\in (\DCI)^n$ of values, the semantics of the
logic assigns a truth value $\phi(\bar a)\in[0,1]$ to a formula
$\phi(x_1,\dots,x_n)$ with free variables at most $x_1,\dots,x_n$. We
define $\phi(\bar a)$ recursively by essentially the same clauses as
in $\ALCP$ for the propositional constructs, and
\begin{align*}
A(x_i)(\bar a) & = A^\CI(a_i)\\
(\exists x_0.\,\phi(x_0,x_1,\dots,x_n))(\bar a) 
&  = \textstyle \bigvee_{a_0\in\DCI}\phi(a_0,a_1,\dots, a_n)\\
(\diabind{x_i}{y}{\phi(y,x_1,\dots,x_n)})(\bar a) 
& = \intsuc{a_i}{\phi(\,\cdot\,,a_1,\dots,a_n)}
\end{align*}
where~$\bigvee$ takes suprema. Moreover, equality is two-valued, i.e.\
$(x_i=x_j)(\bar a)$ is~$1$ if $a_i=a_j$, and $0$ otherwise.

E.g.\ the formula $\diabind{x}{z}{z=y}$ (`the successor of~$x$ is
probably~$y$') denotes the access probability from~$x$ to~$y$,
$\diabind{x}{z}{\diabind{z}{w}{w=y}}$ 
the probability of reaching~$y$ from~$x$ in two independently
distributed steps, and $\exists y.\,\diabind{x}{z}{z=y}$ the
probability of the most probable successor of~$x$.

We have a \emph{standard translation}~$\ST_x$ from $\ALCP$ into
$\FOLP$, indexed over a variable~$x$ naming the current
state. Following Litak et al.~\shortcite{LitakEA18}, we define
$\ST_x$ recursively by
\begin{align*}
\ST_x(A) & = A(x) \\
\ST_x(\Diamond C) & = \diabind{x}{y}{\ST_y(C)},
\end{align*}
and by commutation with all other constructs. 
\begin{lem}
  For every $\ALCP$-concept~$C$ and state~$a$,
  $C(a)=\ST_x(C)(a)$.
\end{lem}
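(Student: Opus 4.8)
The plan is to proceed by a straightforward structural induction on the concept~$C$. Because the standard translation switches its indexing variable as it descends through the modality (recall $\ST_x(\Diamond C)=\diabind{x}{y}{\ST_y(C)}$), the induction hypothesis must be stated uniformly in the naming variable rather than for the fixed~$x$ of the statement: I would prove that for \emph{every} variable~$x$ and every state $a\in\DCI$ one has $C(a)=\ST_x(C)(a)$, where the right-hand side is the value of the one-free-variable formula $\ST_x(C)$ at the assignment sending~$x$ to~$a$. This uniformity is exactly what makes the modal step go through.

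The base cases $C=q$ and $C=A$ are immediate: both sides reduce to~$q$, respectively to $A^\CI(a)$, using $\ST_x(q)=q$ and $\ST_x(A)=A(x)$ together with the clause $A(x)(a)=A^\CI(a)$. The propositional cases $C\ominus q$, $\neg C$, and $C\land D$ follow from the induction hypothesis combined with the observation that the defining clauses for~$\ominus$, $\neg$, and~$\land$ are literally identical in $\ALCP$ and $\FOLP$ (both use Zadeh semantics); since $\ST_x$ commutes with these connectives, one simply substitutes the inductive equalities into the respective clause. For instance, in the case $C\land D$ one has $\ST_x(C\land D)(a)=\min(\ST_x(C)(a),\ST_x(D)(a))=\min(C(a),D(a))=(C\land D)(a)$, and the other propositional cases are analogous.

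The only case requiring any care is $C=\Diamond D$. Here $\ST_x(\Diamond D)=\diabind{x}{y}{\ST_y(D)}$, and by the $\FOLP$ semantics of the expected-value binder this evaluates at~$a$ to $\intsuc{a}{\ST_y(D)}=\sum_{a'\in\DCI} r_a(a')\cdot\ST_y(D)(a')$. Applying the induction hypothesis to~$D$ \emph{with the variable}~$y$ replaces each $\ST_y(D)(a')$ by $D(a')$, turning this sum into precisely the $\ALCP$ clause $(\Diamond D)(a)=\sum_{a'\in\DCI}r_a(a')\cdot D(a')$. The main obstacle, such as it is, is purely notational: one must track the shifting index variable so that the induction hypothesis is available in the form needed for the modal step, and confirm that the binder $\diabind{x}{y}{\,\cdot\,}$ is interpreted by exactly the expectation operator~$\ev_{r_a}$ that defines~$\Diamond$ in $\ALCP$. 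Both are immediate from the definitions, so the lemma amounts to a bookkeeping verification that $\ST_x$ is semantics-preserving.
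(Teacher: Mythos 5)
Your proof is correct: the paper states this lemma without proof (it is among the results whose proofs are omitted as routine), and the intended argument is exactly the structural induction you give. Your one substantive observation --- that the induction hypothesis must be quantified over the naming variable so that it applies to $\ST_y(D)$ in the modal case, where the binder $\diabind{x}{y}{\cdot}$ is interpreted by the same expectation $\ev_{r_a}$ that defines $\Diamond$ in $\ALCP$ --- is precisely the only point of the verification that is not pure bookkeeping.
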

\noindent $\ST$ thus identifies $\ALCP$
as a fragment of $\FOLP$.

\section{Behavioural Distances and Games}
\label{sec:games}

\noindent We next discuss several notions of behavioural distance
between states: via fixed point iteration \`a la
Wasserstein/Kantorovich, via games and via the logic. We focus mostly
on depth-$n$ distances. Only for one version, we define also the
unbounded distance, which will feature in the modal characterization
result. We show in Section~\ref{sec:modal-approx} that at finite
depth, all these distances coincide. It has been shown in previous
work~\cite{dgjp:metrics-labelled-markov,BreugelWorrell05} 
that the unbounded-depth distances defined via Kantorovich fixed point
iteration and via the logic, respectively, coincide in very similar
settings; such results can be seen as probabilistic variants of the
Hennessy-Milner theorem.

We recall standard notions on pseudometric spaces:

\begin{defn}[Pseudometric spaces, non-expansive maps]
  \label{def:metric}
  A \emph{(bounded) pseudometric} on a set $X$ is a function
  $d\colon X\times X\to [0,1]$ such that for $x,y,z\in X$, the
  following axioms hold: $d(x,x) = 0$ (\emph{reflexivity}),
  $d(x,y) = d(y,x)$ (\emph{symmetry}), $d(x,z) \le d(x,y)+d(y,z)$
  (\emph{triangle inequality}). If additionally $d(x,y)=0$ implies
  $x=y$, then $d$ is a \emph{metric}. A \emph{(pseudo)metric space}
  $(X,d)$ consists of a set~$X$ and a (pseudo)metric~$d$ on $X$.
  
  A map $f\colon X\to[0,1]$ is \emph{non-expansive} wrt.\ a
  pseudometric~$d$ if $|f(x)-f(y)|\le d(x,y)$ for all $x,y\in X$.  The
  space of these non-expansive functions, denoted $\nonexpI{X,d}$, is
  equipped with the \emph{supremum (pseudo)metric} $d_\infty$,
  \begin{equation*}
  d_\infty(f,g) = \supnorm{f-g} = \textstyle\bigvee_{x\in X} 
  |f(x)-g(x)|.
  \end{equation*}
  
  \noindent We denote by
  $\ball{d}{\epsilon}{x} = \{y\in X\mid d(x,y) \le \epsilon\}$ the
  \emph{ball} of radius $\epsilon$ around $x$ in $(X,d)$. The space
  $(X,d)$ is \emph{totally bounded} if for every $\epsilon > 0$ there
  exists a finite \emph{$\epsilon$-cover}, i.e.\ finitely many
  elements $x_1,\dots,x_n\in X$ such that
  $X = \bigcup_{i=1}^n \ball{d}{\epsilon}{x_i}$.

\end{defn}
\noindent Recall that a metric space is compact iff it is complete and
totally bounded.

We next introduce the Wasserstein and Kantorovich distances, which
coincide according to Kantorovich-Rubinstein duality. To this end, we
first need the notion of a coupling of two probability distributions,
from which the original distributions are factored out as marginals.

\begin{defn}
  \label{def:coupling}
  Let $\pi_1$ and $\pi_2$ be discrete probability measures on $A$ and
  $B$, respectively. We denote by $\cpl(\pi_1,\pi_2)$ the set of
  \emph{couplings} of~$\pi_1$ and~$\pi_2$, i.e.\ probability measures
  $\mu$ on $A\times B$ such that $\pi_1$ and $\pi_2$ are 
  \emph{marginals} of $\mu$:
  \begin{itemize}
    \item for all $a\in A$, $\sum_{b\in B}\mu(a,b) = \pi_1(a)$;
    \item for all $b\in B$, $\sum_{a\in A}\mu(a,b) = \pi_2(b)$.
  \end{itemize}
\end{defn}

\begin{defn}[Wasserstein and Kantorovich distances]
  \label{def:wasserstein-kantorovich}
  Let $(X,d)$ be a pseudometric space. We generally write
  \begin{equation*}
    \dfun X
  \end{equation*}
  for the set of discrete probability distributions on~$X$. We define
  two pseudometrics on~$\dfun X$, the \emph{Kantorovich
    distance}~$d^\uparrow$ and the \emph{Wasserstein
    distance}~$d^\downarrow$:
  \begin{gather*}
  d^\uparrow(\pi_1,\pi_2) =
  \textstyle\bigvee \{ |\ev_{\pi_1} (f) - \ev_{\pi_2} (f)|
    \mid f \in \nonexpI{X,d} \} \\
    d^\downarrow(\pi_1,\pi_2) =
  \textstyle\bigwedge \{ \ev_\mu (d) \mid \mu\in\cpl(\pi_1,\pi_2) \}
  \end{gather*}
  where $\bigwedge$ takes meets (and~$\bigvee$ suprema).  We extend
  these distances without further mention to zero functions (like the
  functions~$r_a$ at blocking states~$a$) by decreeing that the zero
  function has distance~$1$ from all probability distributions.
\end{defn}

\noindent The notation $d^\uparrow,d^\downarrow$ is meant as a
mnemonic for the fact that these distances are obtained via suprema
respectively via infima.  If $(X,d)$ is separable (contains a
countable dense subset), these pseudometrics coincide, a fact known as
the \emph{Kantorovich-Rubinstein duality} (e.g.~\cite{dudley2002}):

\begin{lem}[Kantorovich-Rubinstein duality] \label{lem:kr-duality}
Let $(X,d)$ be a separable pseudometric space. Then for all 
$\pi_1,\pi_2\in\dfun X$,
\[ d^\uparrow(\pi_1,\pi_2) = d^\downarrow(\pi_1,\pi_2). \]
\end{lem}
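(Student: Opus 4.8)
The plan is to prove the two inequalities separately. The inequality $d^\uparrow(\pi_1,\pi_2)\le d^\downarrow(\pi_1,\pi_2)$ is the elementary direction: for any coupling $\mu\in\cpl(\pi_1,\pi_2)$ and any $f\in\nonexpI{X,d}$, expressing both expectations against the marginals of $\mu$ gives $|\ev_{\pi_1}(f)-\ev_{\pi_2}(f)| = |\ev_\mu(f(x)-f(y))| \le \ev_\mu(|f(x)-f(y)|) \le \ev_\mu(d)$, where the last step uses non-expansiveness of $f$. Taking the supremum over $f$ and the infimum over $\mu$ yields the claim.

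Before attacking the converse I would record a normalisation. Since $d$ takes values in $[0,1]$, every real-valued non-expansive function has oscillation at most the diameter of $(X,d)$, hence at most $1$; as the difference $\ev_{\pi_1}(f)-\ev_{\pi_2}(f)$ is invariant under adding constants to $f$ (both $\pi_i$ being probability measures), I may translate any real-valued non-expansive $f$ into one with values in $[0,1]$ without changing that difference. Replacing $f$ by $1-f$ further shows that the supremum of the absolute value equals the supremum of the signed difference. Thus $d^\uparrow(\pi_1,\pi_2)=\sup_f(\ev_{\pi_1}(f)-\ev_{\pi_2}(f))$, with $f$ ranging over all real-valued non-expansive functions.

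For the converse $d^\downarrow\le d^\uparrow$ I would first treat the case where $\pi_1,\pi_2$ have finite support. There the infimum defining $d^\downarrow$ is the value of a finite linear program: minimise $\ev_\mu(d)=\sum_{a,b}\mu(a,b)\,d(a,b)$ over $\mu\ge 0$ subject to the two marginal constraints. Strong LP duality identifies this minimum with the maximum of $\sum_a\pi_1(a)u(a)+\sum_b\pi_2(b)v(b)$ over dual variables satisfying $u(a)+v(b)\le d(a,b)$. Because the cost $d$ is itself a (pseudo)metric, the standard $c$-transform argument lets me assume $v=-u$ with $u$ non-expansive, turning the dual objective into $\ev_{\pi_1}(u)-\ev_{\pi_2}(u)$; by the normalisation above this maximum is exactly $d^\uparrow(\pi_1,\pi_2)$. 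Hence $d^\downarrow=d^\uparrow$ in the finite case.

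It remains to pass from finite to countable support, and this limiting step is where I expect the real work to lie. Both $\pi_1,\pi_2$ are discrete, so their supports lie in a common countable set; restricting each $\pi_i$ to a finite subset and renormalising yields finite-support approximants $\pi_i^{(n)}$ converging to $\pi_i$ in total variation. The finite case supplies optimal couplings $\mu_n\in\cpl(\pi_1^{(n)},\pi_2^{(n)})$ with $\ev_{\mu_n}(d)=d^\uparrow(\pi_1^{(n)},\pi_2^{(n)})$, and since $\supnorm{f}\le 1$ for the relevant $f$, total-variation convergence of the marginals gives $d^\uparrow(\pi_1^{(n)},\pi_2^{(n)})\to d^\uparrow(\pi_1,\pi_2)$. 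The crux is to extract a limiting coupling of the original $\pi_1,\pi_2$: discreteness makes the family $\{\mu_n\}$ uniformly tight (the tails of $\pi_1,\pi_2$ uniformly bound the mass of every coupling outside a large finite box), so by Prokhorov's theorem a subsequence converges weakly to some $\mu$. Weak convergence transports the marginals to $\pi_1,\pi_2$, so $\mu\in\cpl(\pi_1,\pi_2)$, while boundedness and continuity of $d$ give $\ev_{\mu_n}(d)\to\ev_\mu(d)$. Combining, $d^\downarrow(\pi_1,\pi_2)\le\ev_\mu(d)=\lim_n d^\uparrow(\pi_1^{(n)},\pi_2^{(n)})=d^\uparrow(\pi_1,\pi_2)$, closing the argument. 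The main obstacle is exactly this control of mass at infinity in the passage to the limit; here separability (ensuring a well-behaved weak-convergence setting) together with discreteness of the $\pi_i$ (making tightness automatic) are precisely what the hypotheses provide.
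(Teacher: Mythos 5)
Your proof is correct, but it takes a genuinely different route from the paper. The paper does not reprove the duality at all: it cites the Kantorovich--Rubinstein theorem for separable \emph{metric} spaces (Dudley, Prop.\ 11.8.1) and reduces the pseudometric case to it by passing to the metric quotient $X/{\sim}$ (where $x\sim y$ iff $d(x,y)=0$), using that both the Kantorovich and the Wasserstein liftings preserve isometries, so that the two distances may be computed on the quotient. Your argument is instead self-contained: the elementary inequality $d^\uparrow\le d^\downarrow$, finite linear-programming duality plus the $c$-transform for the finitely supported case, and a tightness/limit argument for countable supports. Both are sound; the paper's version buys brevity at the cost of an external citation and the isometry-preservation fact for the liftings, while yours buys self-containedness and, as a bonus, never actually uses the separability hypothesis (discreteness of $\pi_1,\pi_2$ confines everything to a countable, hence separable, subset). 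One presentational caveat in your limiting step: invoking Prokhorov's theorem on the pseudometric space $(X,d)$ is awkward since that topology need not be Hausdorff; it is cleaner to observe that all the measures live on the fixed countable set $\supp(\pi_1)\times\supp(\pi_2)$ with the discrete topology, where your uniform tail bound together with a diagonal extraction of a pointwise limit $\mu$ already yields $\mu\in\cpl(\pi_1,\pi_2)$ and $\ev_{\mu_n}(d)\to\ev_\mu(d)$ for the bounded function $d$, with no appeal to weak-convergence machinery. With that rephrasing the argument is complete.
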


\noindent The above notions of \emph{lifting} a distance on~$X$ to a
distance on distributions over~$X$ can be used to give fixed point
equations for behavioural distances on models.

\begin{defn}[Fixed point iteration \`a la Wasserstein/Kantorovich]
  \label{def:fixed-point-iteration}
  Given a model $\CI$, we define the chains $(d^K_n)$, $(d^W_n)$ of
  \emph{depth-$n$ Kantorovich} and \emph{Wasserstein distances},
  respectively, via fixed point iteration:
  \begin{gather*}
    d^W_0(a,b) = d^K_0(a,b) = 0 \\
    d^W_{n+1}(a,b) = \textstyle\bigvee_{A\in\CN}|A^\CI(a)-A^\CI(b)| \vee
    (d^W_n)^\downarrow(\pi_a,\pi_b) \\
    d^K_{n+1}(a,b) = \textstyle\bigvee_{A\in\CN}|A^\CI(a)-A^\CI(b)| \vee
    (d^K_n)^\uparrow(\pi_a,\pi_b)
  \end{gather*}
  where $\vee$ is binary join.  We extend this to states $a,b$ in
  different models~$\CI$, $\CJ$ by taking the disjoint union of~$\CI$,
  $\CJ$.
\end{defn}

\noindent In both cases, we start with the zero pseudometric, and in
the next iteration lift the pseudometric~$d_n$ from the previous step
via Wasserstein/Kantorovich. This lifted metric is then applied to the
probability distributions $\pi_a,\pi_b$ associated with $a,b$. In
addition we take the maximum with the supremum over the distances for
all atomic $A\in\CN$.

We now introduce a key tool for our technical development, a new
up-to-$\epsilon$ bisimulation game inspired by the definition of the
Wasserstein distance.

\begin{defn}[Bisimulation game]
  \label{def:bisimulation-game}
  Given models $\CI,\CJ$, $a_0\in \DCI,b_0\in \DCJ$, and
  $\epsilon_0\in[0,1]$, the \emph{$\epsilon_0$-bisimulation game} for
  $a_0$ and $b_0$ is played by \emph{Spoiler} ($S$) and
  \emph{Duplicator} ($D$), with rules as follows:
  \begin{myitemize}
  \item \emph{Configurations}: triples $(a,b,\epsilon)$, with states
    $a\in \DCI$, $b\in \DCJ$ and maximal allowed deviation
    $\epsilon\in[0,1]$. The \emph{initial configuration} is
    $(a_0,b_0,\epsilon_0)$.
    \item \emph{Moves}: In each round, $D$ first picks a probability measure
    $\mu \in \cpl(\pi_a,\pi_b)$.
    Then, $D$ distributes the deviation~$\epsilon$ over all pairs
    $(a',b')$ of successors, i.e.~picks a function $\epsilon'\colon
    \DCI\times\DCJ\to[0,1]$ such that $\ev_\mu(\epsilon')\le\epsilon$.
    Finally, $S$ picks a pair $(a',b')$ with $\mu(a',b') > 0$; the
    new configuration is then $(a',b',\epsilon'(a',b'))$.
    \item $D$ \emph{wins} if both states are blocking or $\epsilon = 1$.
    \item $S$ \emph{wins} if exactly one state is blocking and $\epsilon <
    1$.
  \item \emph{Winning condition}:
    $|A^\CI(a)-A^\CJ(b)|\hspace{-0.4mm}\le\hspace{-0.4mm}\epsilon$ for
    all $A\in\CN$.
  \end{myitemize}
  \noindent The game comes in two variants, the \emph{(unbounded)
    bisimulation game} and the \emph{$n$-round bisimulation game},
  where $n\ge 0$. Player $D$ wins if the winning condition holds
  \emph{before} every round, otherwise $S$ wins. More precisely, $D$
  wins the unbounded game if she can force an infinite play and the
  $n$-round game once~$n$ rounds have been played (the winning
  condition is not checked after the last round, so in particular, any
  $0$-round game is an immediate win for $D$).
\end{defn}

\begin{rem}
  The above bisimulation game differs from bisimulation games in the
  literature (e.g.~\cite{DesharnaisEA08}) in a number of salient
  features. A particularly striking aspect is that~$D$'s moves are not
  similar to those of~$S$, and moreover~$D$ in fact moves
  before~$S$. Intuitively,~$D$ is required to commit beforehand to a
  strategy that she will use to respond to~$S$'s next move. Note also
  that the precision~$\epsilon$ changes as the game is being played, a
  complication forced by the arithmetic nature of models.
\end{rem}

\noindent This leads to notions of game distance:

\begin{defn}
  \label{def:game-distance}
  \emph{depth-$n$ game distance}~$d^G_n$ and
  (unbounded-depth) \emph{game distance}~$d^G$ are defined as
  \begin{align*}
    d^G_n(a,b)& = \textstyle\bigwedge\{ \epsilon\mid D\text{ wins }\Games_n(a,b,\epsilon)\}\\
    d^G(a,b) & = \textstyle\bigwedge\{ \epsilon\mid D\text{ wins }\Games(a,b,\epsilon)\}. 
  \end{align*}
  where $\Games(a,b,\epsilon)$ and $\Games_n(a,b,\epsilon)$ denote the
  the bisimulation game and the $n$-round bisimulation game on
  $(a,b,\epsilon)$, respectively.
\end{defn}

\noindent Finally we define the depth-$n$ logical distance via
$\ALCP$, restricting to concepts of rank at most~$n$:

\begin{defn}
  \label{def:logical-distance}
  The \emph{depth-$n$ logical distance}~$d^L_n(a,b)$ of states $a$, $b$
  in models $\CI$, $\CJ$ is defined as
  \begin{equation*}
  d^L_n(a,b) = \textstyle\bigvee \{ |C^\CI(a)-C^\CJ(b)| \mid \rk(C)\le n \}.
  \end{equation*}
\end{defn}
\noindent The equivalence of the four bounded-depth behavioural
distances introduced above will be shown in
Theorem~\ref{thm:modal-approx}.

\bigskip

\noindent Behavioural distance forms the yardstick for our notion of
bisimulation invariance; for definiteness:
\begin{defn}
  A quantitative, i.e.\ $[0,1]$-valued, property~$Q$ of states, or a
  formula or concept defining such a property, is
  \emph{bisimulation-invariant} if~$Q$ is non-expansive wrt.\ game
  distance, i.e.\ for states $a,b$ in models $\CI,\CJ$, respectively,
  \begin{equation*}
  |Q(a)-Q(b)|\le d^G(a,b).
  \end{equation*}
  Similarly,~$Q$ is \emph{depth-$n$ bisimulation invariant}, or
  \emph{finite-depth bisimulation invariant} if mention of~$n$ is
  omitted, if~$Q$ is non-expansive wrt.\ $d^G_n$ in the same
  sense. 
\end{defn}
\noindent It is easy to see that \emph{$\ALCP$-concepts are
  bisimulation-invariant}. More precisely, $\ALCP$-concepts of rank
at most~$n$ are depth-$n$ bisimulation invariant (a stronger
invariance since clearly $d^G_n\le d^G$), as shown by routine
induction.
In contrast, many other properties of states are expressible in
$\FOLP$ but not in $\ALCP$, as they fail to be
bisimulation-invariant. Examples include $\diabind{x}{y}{x=y}$
(probability of a self-transition) and $\exists z.\,
\diabind{x}{y}{y=z}$ (highest transition probability to a successor).

We are now ready to formally state our main theorem (a proof will be
given in Section~\ref{sec:main}):
\begin{thm}[Modal characterization]\label{thm:van-benthem}
  Every bisimulation-invariant $\FOLP$-formula of rank at most $n$ can
  be approximated (uniformly across all models) by
  $\ALCP$-concepts of rank at most $3^n$.
\end{thm}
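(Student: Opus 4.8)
The plan is to split the statement into a \emph{depth-reduction} step and a \emph{uniform-approximation} step, letting Theorem~\ref{thm:modal-approx} (coincidence of the four depth-$n$ distances) mediate between the game-theoretic and logical sides. Write $\phi$ for the given bisimulation-invariant $\FOLP$-formula with $\qr(\phi)\le n$. Bisimulation invariance only tells us that $\phi$ is non-expansive wrt.\ the \emph{unbounded} game distance $d^G$. The real content is to upgrade this to non-expansiveness wrt.\ the \emph{bounded-depth} game distance $d^G_{3^n}$, i.e.\ to prove the key lemma: if $d^G_{3^n}(a,b)\le\epsilon$ then $|\phi(a)-\phi(b)|\le\epsilon$. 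Granting this, Theorem~\ref{thm:modal-approx} gives $d^G_{3^n}=d^L_{3^n}$, so $\phi$ is non-expansive wrt.\ the logical distance induced by concepts of rank at most $3^n$, and a Stone--Weierstrass-style approximation argument finishes the proof.

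For the depth-reduction lemma I would follow the unravelling-and-surgery strategy of Otto adapted to the arithmetic setting. First I would unravel both models into forest (tree-like) models rooted at $a$ and $b$; since unravelling is at game distance $0$ from the original state and $\phi$ is non-expansive wrt.\ $d^G$, this changes neither $\phi(a),\phi(b)$ nor $d^G_{3^n}(a,b)$. On the unravelled models I would use the depth-$3^n$ Wasserstein witnesses — by Theorem~\ref{thm:modal-approx} we may read $d^G_{3^n}$ as $d^W_{3^n}$ — to obtain, level by level, couplings $\mu\in\cpl(\pi_a,\pi_b)$ that transport mass while keeping the accumulated deviation below $\epsilon$. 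Guided by these couplings, I would glue the two trees together \emph{below} depth $3^n$, identifying coupled successors so that the modified roots $a',b'$ literally share all behaviour past depth $3^n$. By construction $d^G(a',b')\le\epsilon$, so bisimulation invariance yields $|\phi(a')-\phi(b')|\le\epsilon$.

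It remains to see that the surgery does not move the value of $\phi$, i.e.\ $\phi(a')=\phi(a)$ and $\phi(b')=\phi(b)$. This is a \emph{locality} property of $\FOLP$: on forest models, the value of a rank-$n$ formula at the root depends only on the depth-$3^n$ truncation. I would establish it via an Ehrenfeucht--Fra\"iss\'e game for $\FOLP$ (up to $\epsilon$) in the spirit of the topological FO games of Makowsky and Ziegler, tracking how far each of the $n$ nested variable-binding operators can reach into the tree: an $\exists$- or $\Diamond$-step may descend, compare, and return, contributing a factor of $3$ per rank and hence the radius $3^n$. The arithmetic clause $\diabind{x}{y}{\cdot}$ is handled by turning Duplicator's coupling into a transport witness for the two expected values, invoking Kantorovich--Rubinstein duality (Lemma~\ref{lem:kr-duality}); two-valued equality and blocking states need separate, routine bookkeeping.

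Finally, from $\phi$ being non-expansive wrt.\ $d^L_{3^n}=d^G_{3^n}$ I would extract the approximating concepts. The state space carries the pseudometric $d^G_{3^n}$, which is \emph{totally bounded} because the signature $\CN$ is finite and the Kantorovich fixed-point construction of $d^K_{3^n}$ preserves total boundedness; the rank-$\le 3^n$ concepts $\modf{3^n}$ contain the rational constants and are closed under $\min$, $\max$ and truncated subtraction $\ominus q$, and by definition they induce exactly $d^L_{3^n}$. A lattice-theoretic Stone--Weierstrass argument for $[0,1]$-valued non-expansive functions then shows that every $d^L_{3^n}$-non-expansive map — in particular $\phi$ — is a uniform limit of concepts from $\modf{3^n}$, uniformly across all models. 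The hard part is squarely the surgery-and-locality step: making the coupling-guided gluing keep $d^G(a',b')\le\epsilon$ while simultaneously certifying, through the EF game, that depth $3^n$ suffices to freeze the value of the arithmetic-valued rank-$n$ formula.
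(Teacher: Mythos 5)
Your overall architecture matches the paper's: prove that a rank-$n$ bisimulation-invariant formula is $3^n$-local, upgrade bisimulation invariance to depth-$3^n$ bisimulation invariance via unravelling, and then invoke Theorem~\ref{thm:modal-approx} plus a Stone--Weierstrass argument. However, there are two genuine gaps. The first is the surgery step. You propose to glue the two unravelled trees ``below depth $3^n$'' by \emph{identifying coupled successors} so that the modified roots literally share all behaviour past that depth, concluding $d^G(a',b')\le\epsilon$. In the two-valued relational setting this kind of Otto-style gluing works because Duplicator's answers induce a (partial) functional matching of successors; here the witness is a coupling $\mu\in\cpl(\pi_a,\pi_b)$, i.e.\ a fractional transport plan in which a single frontier node of one tree is typically coupled with positive weight to \emph{several} nodes of the other, so there is no identification of states to perform, and it is not clear what model the surgery produces or why its root would be at unbounded game distance $\le\epsilon$ from the original. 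The paper's Lemma~\ref{lem:local-k-bisim-inv} avoids this entirely: unravel, truncate at depth $k=3^n$ (making all frontier states blocking), and observe that on depth-$k$ trees a winning strategy for Duplicator in the $k$-round game is automatically winning in the unbounded game, because after $k$ rounds both pebbles sit on blocking leaves. That observation, combined with $k$-locality and the triangle inequality for $d^G_k$, gives depth-$k$ invariance with no gluing. Relatedly, your locality argument runs the Ehrenfeucht--Fra\"iss\'e game ``up to $\epsilon$'' with Kantorovich--Rubinstein transport witnesses; the paper's EF game (Definition~\ref{def:ef-game}) is \emph{exact} --- Duplicator must match expected values and transition probabilities on the nose --- and Duplicator's strategy in the locality proof is built from genuine isomorphisms between radius-$k_i$ neighbourhoods of models padded with $n$ fresh disjoint copies of $\CI$ and $\CI^k_a$. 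An approximate EF game would need its own soundness lemma, which you do not supply.

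The second gap is uniformity across models. Theorem~\ref{thm:modal-approx} and the Stone--Weierstrass argument are statements about a \emph{fixed} model $\CI$: they produce, for each $\epsilon$, a concept $C_\epsilon\in\modf{3^n}$ with $\supnorm{\phi-C_\epsilon}\le\epsilon$ \emph{on that model}, and a priori $C_\epsilon$ depends on $\CI$. Your final paragraph simply asserts that the limit is uniform ``across all models'', but nothing in the density argument delivers that: the class of models is a proper class and you cannot run one Stone--Weierstrass approximation over all of them at once. The paper closes this hole with the final model $\CF$ (which exists by accessibility of the relevant functor): one approximates $\phi$ on $\CF$ and then transfers the bound to every model via the unique bounded morphism $h\colon\CI\to\CF$, using $d^G(a,h(a))=0$ and bisimulation invariance of both $\phi$ and $C_\epsilon$ (Lemmas~\ref{lem:bounded-morphism-game} and~\ref{lem:uniform-approx}). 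Some such device --- a final (or suitably saturated) model --- is indispensable and is missing from your proposal.
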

\noindent (The exponential bound on the rank features also in the full
statement of van Benthem's theorem.)

\section{Modal Approximation at Finite Depth}
\label{sec:modal-approx}

We now establish the most important stepping stone on the way to the
eventual proof of the modal characterization theorem: We show that
every depth-$n$ bisimulation-invariant property of states can be
approximated by $\ALCP$-concepts of rank at most~$n$. We prove this
simultaneously with coincidence of the various finite-depth
behavioural pseudometrics defined in the previous section. To begin,

\begin{lem}\label{lem:metrics-equal-GW}
  The game-based pseudometric~$d^G_n$ coincides with the Wasserstein
  pseudometric~$d^W_n$,
\end{lem}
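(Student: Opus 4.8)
My plan is to prove $d^G_n=d^W_n$ by induction on~$n$, establishing the two inequalities in a sharp form that pinpoints the infimum in Definition~\ref{def:game-distance} without ever requiring an optimal coupling to exist. Concretely, I would show for all states $a,b$ that \emph{(I)} if $\epsilon<d^W_n(a,b)$ then $S$ wins $\Games_n(a,b,\epsilon)$, and \emph{(II)} if $\epsilon>d^W_n(a,b)$ then $D$ wins $\Games_n(a,b,\epsilon)$. Claim~(I) forces $\{\epsilon\mid D\text{ wins }\Games_n(a,b,\epsilon)\}\subseteq[d^W_n(a,b),1]$, hence $d^G_n\ge d^W_n$, while~(II) shows this set contains $(d^W_n(a,b),1]$, hence $d^G_n\le d^W_n$; together they give the claimed equality. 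The base case $n=0$ is immediate, since $d^W_0=0$ and a $0$-round game is an unconditional win for~$D$, so both claims hold with~(I) vacuous.

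For the inductive step I unfold $d^W_{n+1}(a,b)=\bigvee_{A\in\CN}|A^\CI(a)-A^\CJ(b)|\vee(d^W_n)^\downarrow(\pi_a,\pi_b)$ and treat Duplicator's direction~(II) first. Assume $\epsilon>d^W_{n+1}(a,b)$; if $\epsilon=1$ then~$D$ wins outright, so assume $\epsilon<1$. Then the atomic part is strictly below~$\epsilon$, so the winning condition holds at the initial configuration; moreover $d^W_{n+1}(a,b)<1$ rules out the case of exactly one blocking state (there the zero-function convention gives $(d^W_n)^\downarrow=1$), and the both-blocking case is an immediate $D$-win. In the remaining case both states are non-blocking and $(d^W_n)^\downarrow(\pi_a,\pi_b)=\bigwedge_\mu\ev_\mu(d^W_n)<\epsilon$, so~$D$ picks a near-optimal coupling $\mu\in\cpl(\pi_a,\pi_b)$ with $\ev_\mu(d^W_n)<\epsilon$. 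Setting $\delta=\epsilon-\ev_\mu(d^W_n)>0$, $D$ distributes the deviation by $\epsilon'(a',b')=\min(d^W_n(a',b')+\delta/2,\,1)$, which satisfies $\ev_\mu(\epsilon')\le\ev_\mu(d^W_n)+\delta/2<\epsilon$. For each successor pair with $\mu(a',b')>0$ the child budget is either strictly above $d^W_n(a',b')$—so the induction hypothesis~(II) yields a $D$-win of $\Games_n(a',b',\epsilon'(a',b'))$—or equal to~$1$, in which case~$D$ wins the child game outright; hence~$D$ wins $\Games_{n+1}(a,b,\epsilon)$.

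Spoiler's direction~(I) is dual. Assume $\epsilon<d^W_{n+1}(a,b)$ (so automatically $\epsilon<1$). If the atomic part already exceeds~$\epsilon$, the winning condition fails at the initial configuration and~$S$ wins at once; likewise if exactly one state is blocking, since then $(d^W_n)^\downarrow=1>\epsilon$ and the rules award~$S$ the win. Otherwise the gap must come from the lifting, $(d^W_n)^\downarrow(\pi_a,\pi_b)>\epsilon$, which forces both states non-blocking. Then for any coupling~$\mu$ and any deviation~$\epsilon'$ that~$D$ may announce with $\ev_\mu(\epsilon')\le\epsilon<\bigwedge_\nu\ev_\nu(d^W_n)\le\ev_\mu(d^W_n)$, we get $\ev_\mu(\epsilon')<\ev_\mu(d^W_n)$, so some pair $(a',b')$ with $\mu(a',b')>0$ satisfies $\epsilon'(a',b')<d^W_n(a',b')$; Spoiler selects it and, by the induction hypothesis~(I), wins $\Games_n(a',b',\epsilon'(a',b'))$. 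Thus~$S$ wins $\Games_{n+1}(a,b,\epsilon)$, completing the induction.

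The main obstacle is that the Wasserstein infimum $\bigwedge_\mu\ev_\mu(d^W_n)$ need not be attained, since successor supports may be countably infinite; this is exactly why I phrase both claims with strict inequalities, so that Duplicator only ever needs a near-optimal coupling and Spoiler only needs that no coupling beats the infimum. The second delicate point is matching the game's terminal conventions—immediate $D$-wins at $\epsilon=1$ or at two blocking states, immediate $S$-win at a single blocking state—with the Wasserstein conventions, in particular that a blocking state's zero function lies at distance~$1$ from every genuine distribution while two zero functions lie at distance~$0$. Once these boundary cases are aligned, the interior argument is the routine coupling/budget bookkeeping sketched above.
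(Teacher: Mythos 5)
Your proof is correct and follows essentially the same route as the paper's: induction on $n$, with Duplicator's strategy built from a near-optimal coupling plus a distributed slack, and Spoiler's direction handled by observing that no announced deviation $\epsilon'$ with $\ev_\mu(\epsilon')\le\epsilon$ can dominate $d^W_n$ on the whole support of $\mu$. The only differences are cosmetic (uniform slack $\delta/2$ instead of the paper's $2^{-i}\delta$, and phrasing the ``$\ge$'' direction as an explicit winning strategy for $S$ rather than reading the Wasserstein bound off a winning strategy for $D$), and your handling of the blocking and $\epsilon=1$ boundary cases matches the paper's conventions.
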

\noindent 
We note next that the modality~$\Diamond$ is non-expansive: We
extend~$\Diamond$ to act on $[0,1]$-valued functions
$f\colon\DCI\to[0,1]$ by
\begin{align*}
(\Diamond f)(a)&=\intsuc{a}{f}.
\end{align*}
\begin{lem}\label{lem:diamond-nonexp}
  The map $f \mapsto \Diamond f$ is non-expansive wrt.\ the supremum
  metric, that is $\supnorm{\Diamond f - \Diamond g} \le 
  \supnorm{f-g}$ for all $f,g\colon \DCI\to[0,1]$.
  \pwnote{Find a better spot for this lemma?}
\end{lem}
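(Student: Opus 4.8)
The plan is to reduce the claimed inequality between suprema to a uniform pointwise estimate. Since
$\supnorm{\Diamond f - \Diamond g} = \bigvee_{a\in\DCI} |(\Diamond f)(a) - (\Diamond g)(a)|$, and a supremum of a family that is bounded above by a fixed constant is itself bounded by that constant, it suffices to prove $|(\Diamond f)(a) - (\Diamond g)(a)| \le \supnorm{f-g}$ for each fixed $a\in\DCI$. I would first dispose of the blocking case: if $a$ is $r$-blocking, then $r_a$ is the zero function, so $(\Diamond f)(a) = \sum_{a'} r_a(a')\,f(a') = 0$ and likewise $(\Diamond g)(a) = 0$, whence the difference vanishes and the bound holds trivially.

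For a non-blocking state $a$, the measure $r_a$ is a genuine probability mass function, so $\sum_{a'\in\DCI} r_a(a') = 1$. The key computation then uses linearity of the expected value together with the triangle inequality for the (absolutely convergent, countable) sum:
\begin{align*}
|(\Diamond f)(a) - (\Diamond g)(a)|
&= \Bigl|\textstyle\sum_{a'\in\DCI} r_a(a')\,(f(a')-g(a'))\Bigr| \\
&\le \textstyle\sum_{a'\in\DCI} r_a(a')\,|f(a')-g(a')|.
\end{align*}
Bounding each factor $|f(a')-g(a')|$ by $\supnorm{f-g}$ and pulling this constant out of the sum yields $\sum_{a'} r_a(a')\,\supnorm{f-g} = \supnorm{f-g}$, where the final equality is exactly the normalization $\sum_{a'} r_a(a') = 1$. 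Taking the supremum over all $a\in\DCI$ then gives the asserted non-expansiveness.

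There is essentially no genuine obstacle here; the lemma is a direct consequence of the fact that $\Diamond$ computes an average. The only points requiring a modicum of care are that the defining sum ranges over the (at most countable) support of $r_a$, so one should note that absolute convergence licenses the termwise triangle inequality, and that it is precisely the normalization $\sum_{a'} r_a(a') = 1$ — rather than mere positivity of the weights — that makes $\Diamond$ contractive.
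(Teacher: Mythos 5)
Your proof is correct and takes essentially the same route as the paper's: both reduce to a pointwise bound at each state $a$ and then use linearity and monotonicity of the expectation $\ev_{r_a}$ together with the normalization $\sum_{a'} r_a(a') \le 1$. You merely spell out the blocking case and the termwise triangle inequality explicitly, which the paper's one-line computation elides.
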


\noindent Following our previous work~\cite{WildEA18}, we prove
coincidence of the remaining pseudometrics in one big induction, along
with total boundedness (needed later to apply a variant of the
Arzel\`a-Ascoli theorem and the Kantorovich-Rubinstein duality) and
modal approximability of depth-$n$ bisimulation-invariant
properties. We phrase the latter as density of the (semantics of)
$\ALCP$-concepts of rank at most~$n$ in the non-expansive function
space (Definition~\ref{def:metric}):

\begin{thm}\label{thm:modal-approx}
  Let $\CI$ be a model. Then for all $n\ge 0$,
  \begin{enumerate}
    \item we have $d^G_n = d^W_n = d^K_n = d^L_n =: d_n$ on $\CI$;
    \label{item:metrics-equal}
    \item the pseudometric space $(\DCI,d_n)$ is 
    totally 
    bounded;
    \label{item:tot-bounded}
    \item $\modf{n}$ is a dense subset of $\nonexpI{\DCI,d_n}$.
    \label{item:modal-approx}
  \end{enumerate}
\end{thm}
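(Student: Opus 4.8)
The plan is to prove the three statements of Theorem~\ref{thm:modal-approx} simultaneously by induction on~$n$, since they are mutually entangled: density of $\modf{n}$ feeds into the coincidence of $d^L_n$ with the other distances at the next level, while total boundedness is what makes the Kantorovich-Rubinstein duality and an Arzel\`a-Ascoli argument available. The base case $n=0$ is trivial: all four distances are the zero pseudometric, the space is (vacuously) totally bounded, and $\modf{0}$ consists only of the rational constants~$q$, whose closure is all constant functions, which is exactly $\nonexpI{\DCI,d_0}$ since $d_0=0$ forces non-expansive maps to be constant.

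\medskip\noindent\textbf{Inductive step.} Assume all three items hold for~$n$; I would establish them for $n+1$ in the following order. First, the chain of equalities in item~\ref{item:metrics-equal}. The identity $d^G_{n+1}=d^W_{n+1}$ is handed to us directly by Lemma~\ref{lem:metrics-equal-GW}. The identity $d^W_{n+1}=d^K_{n+1}$ reduces, after unfolding the fixed-point clauses in Definition~\ref{def:fixed-point-iteration}, to $(d_n)^\downarrow=(d_n)^\uparrow$ on the distributions $\pi_a,\pi_b$; this is precisely Kantorovich-Rubinstein duality (Lemma~\ref{lem:kr-duality}), which requires separability of $(\DCI,d_n)$ — and separability follows from total boundedness of $(\DCI,d_n)$, given by the induction hypothesis (item~\ref{item:tot-bounded} at level~$n$). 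The remaining identity $d^K_{n+1}=d^L_{n+1}$ is the genuinely logical one: the inequality $d^L_{n+1}\le d^K_{n+1}$ is the non-expansiveness of rank-$(n+1)$ concepts, proved by structural induction using Lemma~\ref{lem:diamond-nonexp} for the modal case; the converse $d^K_{n+1}\le d^L_{n+1}$ is where density at level~$n$ is used, since to witness a large Kantorovich distance via a test function $f\in\nonexpI{\DCI,d_n}$ one approximates~$f$ by a rank-$n$ concept (item~\ref{item:modal-approx} at level~$n$), wraps it in~$\Diamond$, and combines with the atomic contribution $\bigvee_A|A^\CI(a)-A^\CI(b)|$ to build a rank-$(n+1)$ concept distinguishing $a,b$.

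\medskip\noindent\textbf{Total boundedness and density at level $n+1$.} For item~\ref{item:tot-bounded}, the $(n+1)$-distance between $a,b$ is controlled by the atomic differences together with the Kantorovich distance of $\pi_a,\pi_b$ over $(\DCI,d_n)$. Since $(\DCI,d_n)$ is totally bounded, a standard lifting argument shows that the space of distributions $(\dfun\DCI,(d_n)^\uparrow)$ is again totally bounded; combining a finite $\epsilon$-cover of distributions with the (finite, bounded) range of the finitely many atomic interpretations yields a finite $\epsilon$-cover for $(\DCI,d_{n+1})$. Finally, item~\ref{item:modal-approx}: to show $\modf{n+1}$ is dense in $\nonexpI{\DCI,d_{n+1}}$, I would apply a quantitative Stone--Weierstrass / lattice-version of Arzel\`a-Ascoli on the totally bounded space $(\DCI,d_{n+1})$. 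The set $\modf{n+1}$ is closed under the lattice operations $\min,\max$ (via $\land,\lor$), under truncated subtraction $\ominus q$ and negation, and contains the constants and the separating functions $A^\CI$ and $\Diamond(\text{rank-}n\text{ concept})$; one checks it separates points well enough (points at $d_{n+1}$-distance $\delta$ are distinguished by a concept differing by roughly~$\delta$) and then invokes the approximation theorem to conclude uniform density.

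\medskip\noindent The step I expect to be the main obstacle is the converse logical inequality $d^K_{n+1}\le d^L_{n+1}$, specifically converting an optimal non-expansive \emph{test function} realizing the Kantorovich distance into an actual \emph{concept} of rank $n+1$. This is where the induction is load-bearing in a subtle way: the test function lives in $\nonexpI{\DCI,d_n}$ and is only \emph{approximated}, not realized, by rank-$n$ concepts, so I must track how the approximation error propagates through the modality and the join with the atomic terms, and argue that the resulting concept still witnesses the full distance up to an arbitrarily small slack. Managing this $\epsilon$-bookkeeping, together with correctly handling blocking states (where the distance conventions of Definition~\ref{def:wasserstein-kantorovich} assign distance~$1$ and the game has its special win conditions), is the delicate part; the total-boundedness and duality steps, while essential, are comparatively routine once separability is in hand.
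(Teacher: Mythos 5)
Your proposal follows the same overall strategy as the paper's proof: a simultaneous induction on~$n$ in which $d^G_n=d^W_n$ comes from Lemma~\ref{lem:metrics-equal-GW}, $d^W_n=d^K_n$ from Kantorovich--Rubinstein duality via separability of the totally bounded space at level $n-1$, $d^K_n=d^L_n$ from density of $\modf{n-1}$ pushed through the non-expansive modality $\Diamond$ plus a structural induction over propositional combinations, and density at level~$n$ from a lattice-form Stone--Weierstra\ss{} theorem with two-point approximation via truncated subtraction. The one place you genuinely diverge is total boundedness of $(\DCI,d_n)$: the paper applies an Arzel\`a--Ascoli theorem for totally bounded spaces to get total boundedness of the function space $\nonexpI{\DCI,d_{n-1}}$, extracts finitely many concepts $C_1,\dots,C_m$ covering it, and embeds $\DCI$ almost-isometrically into $[0,1]^{k+m}$ via $a\mapsto(A_i^\CI(a),(\Diamond C_j)^\CI(a))$; you instead lift total boundedness of $(\DCI,d_{n-1})$ directly to the space of distributions under the Kantorovich distance and pull back a finite cover along $a\mapsto((A^\CI(a))_A,\pi_a)$. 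Both arguments are correct (your lifting claim is a standard fact: push each measure onto a finite $\epsilon$-net and discretize the weights, remembering to add one cell for blocking states); your route avoids the Arzel\`a--Ascoli machinery for this item, though the paper still needs the density-of-concepts viewpoint elsewhere, so the saving is local rather than structural.
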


\begin{proof}[Proof 
  sketch]
  By simultaneous induction on~$n$.
  
  In the base case $n = 0$, all the behavioural distances are the 
  zero pseudometric, so that total boundedness follows trivially and 
  the density claim follows because non-expansive maps are just 
  constants in $[0,1]$ and the syntax of $\ALCP$ includes truth 
  constants $q\in\RatI$.
  
  For the inductive step, let $\CI$ be a model and $n > 0$, and assume
  as the inductive hypothesis that all claims in
  Theorem~\ref{thm:modal-approx} hold for all $n' < n$.  We begin with
  Item~\ref{item:metrics-equal}; $d^G_n = d^W_n$ is
  already proved (Lemma~\ref{lem:metrics-equal-GW}).
  \begin{itemize}[wide]
  \item $d^W_n = d^K_n$ follows by Kantorovich-Rubinstein duality
    (Lemma~\ref{lem:kr-duality}), since every totally bounded
    pseudometric space is separable.
  \item $d^K_n = d^L_n$: By Lemma~\ref{lem:diamond-nonexp} and the
    inductive hypothesis, $\Diamond[\modf{n-1}]$ is dense in
    $\Diamond[\nonexpI{\DCI,d_{n-1}}]$. Thus, the supremum in the
    definition of $d^K_n$ does not change when it is taken only over
    the concepts in $\modf{n-1}$ instead of all nonexpansive
    properties. The proof is finished by a simple induction over
    propositional combinations of concepts.
  \end{itemize}
  
  \noindent \emph{Item~\ref{item:tot-bounded}}: By the inductive
  hypothesis, the space $(\DCI,d_{n-1})$ is totally bounded. By the
  Arzel\`a-Ascoli theorem (in a version for totally bounded spaces and
  non-expansive maps, cf.~\cite{WildEA18}), it follows that
  $\nonexpI{\DCI,d_{n-1}}$ is totally bounded wrt.~the supremum
  pseudometric.  This implies that depth-$n$ distances can be
  approximated up to $\epsilon$ by examining differences at only
  finitely many, say $m$, concepts. As $([0,1]^m,d_\infty)$ is totally
  bounded, $(\DCI,d_n)$ is, too.
  
  \emph{Item~\ref{item:modal-approx}}: By the Stone-Weierstra\ss{}
  theorem (again in a version for totally bounded spaces and
  non-expansive maps~\cite{WildEA18}) it suffices to give, for each
  $\epsilon>0$, each non-expansive map $f\in\nonexpI{\DCI,d_n}$, and
  each pair of states $a,b\in\DCI$ a concept $C\in\modf{n}$ such that
  \begin{equation*}
    \max(|f(a)-C^\CI(a)|, |f(b)-C^\CI(b)|) \le \epsilon.
  \end{equation*}
  To construct such a~$C$, we note that $|f(a)-f(b)|\le d^L_n(a,b)$
  (by non-expansiveness), so there exists some $D\in\modf{n}$ such
  that $|D^\CI(a)-D^\CI(b)|\ge |f(a)-f(b)|-\epsilon$. From~$D$, we can
  construct~$C$ using truncated subtraction~$\ominus$.
\end{proof}

\noindent This completes the proof of
Theorem~\ref{thm:modal-approx}. Now that we can approximate depth-$k$
bisimulation-invariant properties by $\ALCP$-concepts of rank~$k$ on
any fixed model, we need to make the approximation uniform across all
models. We achieve this by means of a \emph{final} model, i.e.\ one
that realizes all behaviours. 
Formally:
\begin{defn}
  A \emph{(probabilistic) bounded morphism} between models~$\CI$,
  $\CJ$ is a map $f:\DCI\to\DCJ$ such that $A^\CI=f^{-1}[A^\CJ]$ for
  each $A\in\CN$ and $r_{f(a)}(B)=r_a(f^{-1}[B])$ for all
  $B\subseteq\DCJ$, $a\in\DCI$ (implying that~$a$ is blocking iff
  $f(a)$ is blocking). A model~$\CF$ is \emph{final} if for every
  model~$\CI$, there exists a unique bounded morphism $\CI\to\CF$.
\end{defn}
\noindent It follows from standard results in coalgebra~\cite{Barr93}
that a final model exists. Bounded morphisms preserve behaviour
on-the-nose, that is:
\begin{lem}\label{lem:bounded-morphism-game}
  Let $f\colon\CI\to\CJ$ be a bounded morphism. Then, for any
  $a\in\Delta^\CI$, $d^G(a,f(a)) = 0$.
\end{lem}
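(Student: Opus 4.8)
The plan is to exhibit an explicit winning strategy for Duplicator in the \emph{unbounded} game $\Games(a,f(a),0)$ with initial deviation zero. Once $D$ wins this game, $0$ belongs to the set $\{\epsilon\mid D\text{ wins }\Games(a,f(a),\epsilon)\}$, so $d^G(a,f(a))=0$ follows immediately from Definition~\ref{def:game-distance}. Throughout the play I would maintain the invariant that every configuration reached has the shape $(c,f(c),0)$ for some $c\in\DCI$; the initial configuration $(a,f(a),0)$ satisfies this.

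First I would check that the winning condition and the blocking/non-blocking bookkeeping are unproblematic under this invariant. Since $f$ is a bounded morphism, $A^\CI=f^{-1}[A^\CJ]$ yields $A^\CI(c)=A^\CJ(f(c))$ for every atomic concept $A$, so $|A^\CI(c)-A^\CJ(f(c))|=0\le\epsilon$ holds before every round. Moreover the transport clause $r_{f(c)}(B)=r_c(f^{-1}[B])$ forces $c$ to be blocking exactly when $f(c)$ is, so the configuration never has \emph{exactly} one blocking state; if both states are blocking, $D$ wins outright, and otherwise the play proceeds.

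The key step is Duplicator's choice of coupling at a non-blocking configuration $(c,f(c),0)$. I would let $D$ play the coupling $\mu$ concentrated on the graph of $f$, namely $\mu(c',d')=r_c(c')$ when $d'=f(c')$ and $\mu(c',d')=0$ otherwise. Its first marginal is visibly $\pi_c=r_c$, while its second marginal at $d'$ is $\sum_{c'\colon f(c')=d'}r_c(c')=r_c(f^{-1}[\{d'\}])=r_{f(c)}(\{d'\})=\pi_{f(c)}(d')$, using precisely the pushforward condition of the bounded-morphism definition; hence $\mu\in\cpl(\pi_c,\pi_{f(c)})$. Since $\epsilon=0$, Duplicator's distribution of the deviation is forced to satisfy $\ev_\mu(\epsilon')\le 0$, so $\epsilon'$ vanishes on the support of $\mu$; $D$ simply takes $\epsilon'\equiv 0$. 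Any pair $(c',d')$ that $S$ can subsequently pick with $\mu(c',d')>0$ lies on the graph of $f$, i.e.\ $d'=f(c')$, so the successor configuration is $(c',f(c'),0)$, re-establishing the invariant.

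Iterating, $D$ can respond to every $S$-move and thereby force an infinite play (or reach a doubly-blocking win), with the winning condition holding before each round; hence $D$ wins $\Games(a,f(a),0)$ and $d^G(a,f(a))=0$. I expect the only genuine content to be the verification that the graph coupling is an honest coupling: this is exactly where the measure-transport condition $r_{f(c)}(B)=r_c(f^{-1}[B])$ is consumed, and it is the one place where the argument is more than bookkeeping. Everything else — agreement on atoms, synchronization of blocking states, and the forced zero deviation that propagates the invariant — is routine.
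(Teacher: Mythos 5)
Your proof is correct and follows essentially the same route as the paper's: the same invariant that configurations have the form $(c,f(c),0)$, the same coupling concentrated on the graph of $f$ (with the marginals verified via the pushforward condition $r_{f(c)}(B)=r_c(f^{-1}[B])$), and the same forced choice $\epsilon'\equiv 0$. Your additional explicit checks of the atomic-concept agreement and the synchronization of blocking states are details the paper leaves implicit but do not change the argument.
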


\noindent This entails the following lemma, which will enable us to
use approximants on the final model as uniform approximants across all
models:
\begin{lem}\label{lem:uniform-approx}
  Let~$\CF$ be a final model, and let $\phi$ and $\psi$ be
  bisimulation-invariant first-order properties. Then, for any
  model~$\CI$, $\supnorm{\phi-\psi}^\CI \le \supnorm{\phi-\psi}^\CF$.
\end{lem}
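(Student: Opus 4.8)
The plan is to exploit finality of $\CF$ together with the fact that bounded morphisms preserve behavioural distance exactly (Lemma~\ref{lem:bounded-morphism-game}). Fix an arbitrary model $\CI$. By finality there is a (unique) bounded morphism $f\colon\CI\to\CF$, and the whole argument will reduce evaluation on $\CI$ to evaluation on the image of $f$ inside $\CF$.

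First I would observe that for every state $a\in\DCI$ we have $d^G(a,f(a))=0$ by Lemma~\ref{lem:bounded-morphism-game}, where as usual the game distance between states of the two models is computed in their disjoint union. Since $\phi$ and $\psi$ are bisimulation-invariant, i.e.\ non-expansive wrt.\ $d^G$, this yields $|\phi^\CI(a)-\phi^\CF(f(a))|\le d^G(a,f(a))=0$ and likewise for $\psi$; hence $\phi^\CI(a)=\phi^\CF(f(a))$ and $\psi^\CI(a)=\psi^\CF(f(a))$ exactly.

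Combining the two equalities gives, for each $a\in\DCI$,
\[
  |\phi^\CI(a)-\psi^\CI(a)|
  = |\phi^\CF(f(a))-\psi^\CF(f(a))|
  \le \supnorm{\phi-\psi}^\CF,
\]
since $f(a)$ is a particular state of $\CF$. Taking the supremum over all $a\in\DCI$ on the left-hand side then yields $\supnorm{\phi-\psi}^\CI\le\supnorm{\phi-\psi}^\CF$, as desired.

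The argument is short because all the real work is already packaged in Lemma~\ref{lem:bounded-morphism-game}; the only point that needs care, and which I regard as the crux, is the passage from non-expansiveness to on-the-nose equality of truth values, which is exactly what vanishing game distance buys us. I would also note that finality is used only to guarantee the existence of \emph{some} bounded morphism $\CI\to\CF$: uniqueness plays no role here, and no assumption on the first-order rank or definability of $\phi$ and $\psi$ is needed beyond their bisimulation invariance.
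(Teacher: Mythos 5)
Your proof is correct and follows essentially the same route as the paper's: take the (unique) bounded morphism $f\colon\CI\to\CF$, use Lemma~\ref{lem:bounded-morphism-game} plus bisimulation invariance to get $\phi^\CI(a)=\phi^\CF(f(a))$ and $\psi^\CI(a)=\psi^\CF(f(a))$, and take suprema. Your side remark that only existence (not uniqueness) of the morphism is needed is accurate but does not change the argument.
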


\section{Locality}\label{sec:locality}

The proof of the modal characterization theorem now further proceeds
by first establishing that every bisimulation-invariant first-order
formula~$\phi$ is \emph{local} in a sense to be made precise shortly,
and subsequently that~$\phi$ is in fact even finite-depth bisimulation
invariant, for a depth that is exponential in the rank
of~$\phi$. Locality refers to a probabilistic variant of Gaifman
graphs~\cite{Gaifman82}:

\begin{defn}
  Let $\CI$ be a model.
  \begin{itemize}[wide]
    \item The \emph{Gaifman graph} of $\CI$ is the undirected graph on
    the set~$\DCI$ of vertices that has an edge for every pair 
    $(a,b)$ with $r^\CI(a,b) > 0$ or $r^\CI(b,a) > 0$.
  \item The \emph{Gaifman distance}
    $D \colon \DCI\times\DCI \to \mathbb{N}\cup\{\infty\}$ is graph
    distance in the Gaifman graph: For every $a,b\in\DCI$, the
    distance $D(a,b)$ is the least number of edges on any path from~$a$
    to $b$, if such a path exists, and $\infty$ otherwise.
  \item For $a\in\DCI$ and $k \ge 0$, the \emph{radius $k$
      neighbourhood}
    $\nbhood{k}{a} = \{b \in \DCI \mid D(a,b) \le k \}$ of~$a$
    consists of the states reachable from~$a$ in at most $k$ steps.  
  \item The \emph{restriction} of $\CI$ to $\nbhood{k}{a}$ is the
    model $\CI^k_a$ with set $\nbhood{k}{a}$ of states, and
    \begin{align*}
      A^{\CI^k_a}(b) & = A^\CI(b) &
      r^{\CI^k_a}(b,c) & =
                         \begin{cases}
                           r^\CI(b,c) & \text{if } D(a,b) < k \\
                           0 & \text{if } D(a,b) = k \\
                         \end{cases}
    \end{align*}
    for $A\in\CN$ and $b,c\in\nbhood{k}{a}$.
  \end{itemize}
\end{defn}
\noindent The restriction to $\nbhood{k}{a}$ thus makes all states at
distance~$k$ blocking. 
Restricted models have the expected relationship with games
of bounded depth:

\begin{lem}\label{lem:nbhood-bisim}
  Let~$a$ be a state in a model~$\CI$. Then~$D$ wins the $k$-round
  $0$-bisimulation game for $\CI,a$ and $\CI^k_a,a$.
\end{lem}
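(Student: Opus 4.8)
The plan is to hand Duplicator an explicit winning strategy based on the \emph{diagonal} coupling, keeping both pebbles on one and the same state throughout the play. Concretely, Duplicator maintains the invariant that every configuration reached has the form $(b,b,0)$ with $b\in\nbhood{k}{a}$, starting from $(a,a,0)$. The guiding idea is that $\CI$ and its restriction $\CI^k_a$ are literally identical until one reaches the boundary sphere $\{b\mid D(a,b)=k\}$, so pairing each state with itself should trivially win as long as that sphere is not crossed.

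To verify the strategy I would argue as follows. At a configuration $(b,b,0)$ the winning condition holds immediately: the restriction preserves atomic concepts, $A^{\CI^k_a}(b)=A^\CI(b)$ for all $A\in\CN$, so every atomic difference is $0\le 0$. For the move, the key point is that whenever $D(a,b)<k$ the outgoing distributions coincide on the nose, $r^{\CI^k_a}(b,\cdot)=r^\CI(b,\cdot)$, by definition of the restriction, and both are supported inside $\nbhood{k}{a}$ since any successor $c$ of $b$ has $D(a,c)\le D(a,b)+1\le k$. Hence Duplicator may legally play the diagonal coupling $\mu(c,c')=r_b(c)$ if $c=c'$ and $0$ otherwise; distributing $\epsilon=0$ forces the deviation function to vanish on the support of $\mu$, so whichever pair $(c,c')$ with $\mu(c,c')>0$ Spoiler selects has $c=c'$ and yields the next configuration $(c,c,0)$, preserving the invariant. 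If instead $b$ is blocking in $\CI$ then, as $D(a,b)<k$, it is blocking in $\CI^k_a$ too, so both states are blocking and Duplicator wins outright.

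The one genuine subtlety --- and the main, though elementary, obstacle --- is the boundary: a state $b$ with $D(a,b)=k$ is blocking in $\CI^k_a$ but may be non-blocking in $\CI$, and there Spoiler would win, since exactly one state is blocking while $\epsilon=0<1$. I would dispose of this by a timing count. Each round advances along a single $r$-edge, so after $i$ rounds the pebbles sit at Gaifman distance at most $i$ from $a$; equivalently, \emph{before} round $i$ the current state lies at distance at most $i-1$. Since exactly $k$ rounds are played ($i\in\{1,\dots,k\}$), before every round that actually occurs the distance is at most $k-1<k$, so the diagonal analysis above applies verbatim and the boundary sphere is never reached at a checkpoint. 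A boundary state can be entered only \emph{after} the $k$-th round, and as the $n$-round game performs no check after its final round, this can never help Spoiler. Therefore Duplicator survives all $k$ rounds and wins the $k$-round $0$-bisimulation game for $\CI,a$ and $\CI^k_a,a$.
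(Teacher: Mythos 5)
Your proof is correct and follows essentially the same route as the paper's: Duplicator plays the diagonal coupling with zero deviation, maintaining an invariant configuration $(b,b,0)$ with $D(a,b)\le i$ after $i$ rounds, so that the boundary sphere (where the two models disagree on blocking) can only be entered after the final round, where no check occurs. Your treatment of the boundary case is in fact somewhat more explicit than the paper's.
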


\noindent Locality of a formula now means that its truth values only
depend on the neighbourhood of the state in question:

\begin{defn}
  A formula $\phi(x)$ is \emph{$k$-local} for a radius~$k$ if for
  every model $\CI$ and every $a\in \DCI$,
  $\phi^\CI(a) = \phi^{\CI^k_a}(a)$.
\end{defn}
\noindent As $\ALCP$-concepts are bisimulation-invariant,
Lemma~\ref{lem:nbhood-bisim} implies

\begin{lem}
  Every $\ALCP$-concept of rank at most $k$ is $k$-local.
\end{lem}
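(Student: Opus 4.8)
The plan is to combine the two ingredients that the excerpt has just assembled: the game-theoretic description of neighbourhoods (Lemma~\ref{lem:nbhood-bisim}) and the depth-graded bisimulation invariance of modal concepts. Concretely, fix a concept $C$ with $\rk(C)\le k$, a model~$\CI$, and a state $a\in\DCI$; the goal is to show $C^\CI(a)=C^{\CI^k_a}(a)$, which is exactly $k$-locality of~$C$ once $C$, $\CI$ and $a$ are seen to be arbitrary.

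First I would read off the game distance between $a$ viewed in~$\CI$ and $a$ viewed in the restricted model~$\CI^k_a$. By Lemma~\ref{lem:nbhood-bisim}, Duplicator wins the $k$-round $0$-bisimulation game for these two states, i.e.\ $D$ wins $\Games_k(a,a,0)$ (the game being natively defined between the two models~$\CI$ and~$\CI^k_a$). Since $\epsilon=0$ then lies in the set over which the depth-$k$ game distance takes its infimum, Definition~\ref{def:game-distance} immediately gives $d^G_k(a,a)=0$.

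Next I would invoke that $\ALCP$-concepts of rank at most~$k$ are depth-$k$ bisimulation invariant, i.e.\ non-expansive wrt.\ $d^G_k$, as recorded after Definition~\ref{def:logical-distance}. Applied to the two copies of~$a$, this yields
\begin{equation*}
|C^\CI(a)-C^{\CI^k_a}(a)|\le d^G_k(a,a)=0,
\end{equation*}
so the two truth values coincide, establishing the claim.

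I expect no genuine obstacle here: the statement is essentially a corollary of the two cited facts, and the only point requiring care is bookkeeping, namely matching the depth parameter of the game ($k$ rounds) with the rank bound on~$C$ ($\rk(C)\le k$), and observing that the ``$0$'' in the $0$-bisimulation game refers to the initial deviation $\epsilon_0=0$ rather than to a round count. Should one prefer an argument that avoids the game machinery, one could instead prove $k$-locality directly by induction on the structure of~$C$, tracking how the neighbourhood radius decreases by one under each~$\Diamond$ and using that $\CI^k_a$ agrees with~$\CI$ on all transitions out of states at Gaifman distance $<k$ from~$a$; but routing through Lemma~\ref{lem:nbhood-bisim} is considerably shorter.
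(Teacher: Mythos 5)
Your proof is correct and follows essentially the same route as the paper, which likewise derives the lemma directly from Lemma~\ref{lem:nbhood-bisim} together with the (previously noted) depth-$k$ bisimulation invariance of rank-$\le k$ concepts. Your explicit bookkeeping that the $k$-round game win yields $d^G_k(a,a)=0$ and that one must invoke the depth-$k$ (rather than unbounded) invariance is exactly the right way to make the paper's one-line argument precise.
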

\noindent To prove locality of bisimulation-invariant
$\FOLP$-formulae, we require a model-theoretic tool, an adaptation of
Eh\-ren\-feucht-Fraïss{\'e} equivalence to the probabilistic setting:

\begin{defn}
  \label{def:ef-game}
  Let $\CI,\CJ$ be models, and let~$\bar a_0$ and~$\bar b_0$
  be vectors of equal length over~$\DCI$ and~$\DCJ$, respectively.
  The \emph{Ehrenfeucht-Fra\"iss\'e game for $\CI,\bar a_0$ and
    $\CJ,\bar b_0$}, played by \emph{Spoiler} ($S$) and
  \emph{Duplicator} ($D$), is given as follows.
  \begin{itemize}
  \item\emph{Configurations:} pairs $(\bar a,\bar b)$ of vectors
    $\bar a$ over~$\DCI$ and~$\bar b$ over~$\DCJ$; the \emph{initial
      configuration} is $(\bar a_0,\bar b_0)$.
    \item \emph{Moves:}
    Each round can be played in one of two ways, chosen by $S$:
    \begin{itemize}[wide]
    \item \emph{Standard round}: $S$ selects a state in one model, say
      $a\in\DCI$, and $D$ then has to select a state in the other
      model, say $b\in\DCJ$, reaching the configuration
      $(\bar aa,\bar bb)$.
    \item \emph{Probabilistic round}: $S$ selects an index~$i$ and a
      fuzzy subset in one model, say $\phi_A\colon\DCI\to [0,1]$.  $D$
      then has to select a fuzzy subset in the other model, say
      $\phi_B\colon\DCJ\to [0,1]$, such that
      $\intsuc{a_i}{\phi_A} = \intsuc{b_i}{\phi_B}$. Then, $S$ selects
      an element on one side, say $a\in\DCI$, such that 
      $r_{a_i}(a)>0$,
      and $D$ subsequently selects an element on the other side, say
      $b\in\DCJ$, such that $\phi_A(a) = \phi_B(b)$ and 
      $r_{b_i}(b)>0$,
      reaching the configuration $(\bar aa,\bar bb)$.
    \end{itemize} 
  \item \emph{Winning conditions}: Any player who cannot move
    loses. $S$ wins if a configuration is reached (including the
    initial configuration) that fails to be a partial
    isomorphism. Here, a configuration $(\bar a,\bar b)$ is a
    \emph{partial isomorphism} if
    \begin{itemize}
      \item $a_i=a_j\iff b_i=b_j$
      \item $A^\CI(a_i) = A^\CJ(b_i)$ for all $i$ and all $A\in\CN$
      \item $r^\CI(a_i,a_j) = r^\CJ(b_i,b_j)$ for all $i,j$.
    \end{itemize}
    Player~$D$ wins if she reaches the $n$-th round (maintaining
    configurations that are not winning for $S$).
  \end{itemize}
\end{defn}
\noindent
For our purposes, we need only soundness of
Ehrenfeucht-Fra\"iss\'e equivalence:
\begin{lem}[Ehrenfeucht-Fra\"iss\'e invariance]
  \label{lem:ef-inv-fol}
  Let $\CI,\CJ$ be models, and let $\bar a_0,\bar b_0$ be vectors of
  length $m$ over~$\DCI$ and $\DCJ$, respectively, such that $D$ wins
  the $n$-round Ehrenfeucht-Fra\"iss\'e game on $\bar a_0,\bar b_0$.
  Then for every $\FOLP$-formula $\phi$ with $\qr(\phi)\le n$ and free
  variables at most $x_1,\dots,x_m$,
  \begin{equation*}
  \phi(\bar a_0) = \phi(\bar b_0).
  \end{equation*}  
\end{lem}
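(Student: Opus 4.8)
The plan is to prove, by induction on the number of rounds~$n$, the slightly strengthened statement that whenever $D$ wins the $n$-round Ehrenfeucht-Fra\"iss\'e game from an arbitrary configuration $(\bar a,\bar b)$ (of some common length~$m$), then $\phi(\bar a)=\phi(\bar b)$ for every $\FOLP$-formula $\phi$ with $\qr(\phi)\le n$ and free variables among $x_1,\dots,x_m$. For fixed~$n$ I would run a secondary induction on the structure of~$\phi$. The only features of $D$'s winning strategy I need are (i)~that the current configuration is a partial isomorphism, since otherwise $S$ has already won, and (ii)~that after any single round $D$ can reach a configuration from which she wins the remaining $(n-1)$-round game. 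I also note that $D$ winning the $n$-round game entails winning every $k$-round game for $k\le n$, since the same strategy maintains partial isomorphisms through fewer rounds.

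In the base case $n=0$, every rank-$0$ formula is a propositional combination of truth constants~$q$ and equality atoms $x_i=x_j$; since the configuration is a partial isomorphism we have $a_i=a_j\iff b_i=b_j$, so all equality atoms agree, and the connectives $\ominus q$, $\neg$, $\land$ are evaluated pointwise and hence preserve agreement. In the inductive step the atomic cases are immediate from the partial-isomorphism conditions: $A(x_i)$ gives $A^\CI(a_i)=A^\CJ(b_i)$, and equality is handled as before; the propositional cases again follow from the structural induction hypothesis, as these connectives have rank equal to the maximum of their arguments' ranks.

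For the quantifier case $\phi=\exists x_0.\,\psi$ with $\qr(\psi)\le n-1$, I use the standard round. Given any $a_0\in\DCI$ chosen by~$S$, Duplicator's strategy supplies $b_0\in\DCJ$ from which $D$ wins the remaining $(n-1)$-round game, so the outer induction hypothesis applied to~$\psi$ at $(\bar a a_0,\bar b b_0)$ yields $\psi(a_0,\bar a)=\psi(b_0,\bar b)$. Hence $\psi(a_0,\bar a)\le\phi(\bar b)$ for every~$a_0$, and taking the supremum gives $\phi(\bar a)\le\phi(\bar b)$; the reverse inequality follows by playing the round on the $\CJ$-side, so $\phi(\bar a)=\phi(\bar b)$.

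The main obstacle is the modal case $\phi=\diabind{x_i}{y}{\psi}$ with $\qr(\psi)\le n-1$, where I must show $\ev_{r_{a_i}}(\psi(\cdot,\bar a))=\ev_{r_{b_i}}(\psi(\cdot,\bar b))$. Here I let $S$ play a probabilistic round at index~$i$ with the fuzzy subset $\phi_A:=\psi(\cdot,\bar a)$; $D$ must answer with some $\phi_B\colon\DCJ\to[0,1]$ satisfying $\ev_{r_{a_i}}(\phi_A)=\ev_{r_{b_i}}(\phi_B)$, and it remains to identify~$\phi_B$ with $\psi(\cdot,\bar b)$ on the support of $r_{b_i}$. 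The crucial point is that in the element-picking phase $S$ may move on \emph{either} side: picking any $b'\in\DCJ$ with $r_{b_i}(b')>0$ forces $D$ to produce $a'\in\DCI$ with $r_{a_i}(a')>0$ and $\phi_A(a')=\phi_B(b')$, from which $D$ still wins the remaining $(n-1)$-round game. The outer induction hypothesis then gives $\psi(a',\bar a)=\psi(b',\bar b)$, and chaining $\phi_B(b')=\phi_A(a')=\psi(a',\bar a)=\psi(b',\bar b)$ shows $\phi_B(b')=\psi(b',\bar b)$ for every~$b'$ in the support of $r_{b_i}$. Consequently $\ev_{r_{b_i}}(\phi_B)=\ev_{r_{b_i}}(\psi(\cdot,\bar b))$, which together with $D$'s expectation-matching yields the desired equality. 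Finally I would check the blocking cases for consistency: if, say, $a_i$ were blocking while $b_i$ were not, then $S$ could pick an element $b'$ in the nonempty support of $r_{b_i}$ and leave $D$ with no legal response, contradicting that $D$ survives even one round; hence $a_i$ is blocking iff $b_i$ is, and in that situation both expected values vanish.
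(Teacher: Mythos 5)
Your proposal is correct and follows essentially the same route as the paper's proof: a structural/rank induction using the partial-isomorphism condition for the atomic cases, a supremum argument via standard rounds for $\exists$ (the paper phrases this with an explicit $\delta$, which is equivalent), and, for the modal case, letting $S$ play the probabilistic round with $\phi_A=\psi(\cdot,\bar a)$ and then using the element-picking phase on the $\CJ$-side to force $D$'s reply $\phi_B$ to agree with $\psi(\cdot,\bar b)$ on the support of $r_{b_i}$ (the paper argues this by contradiction, you argue it directly). Your extra observation about blocking indices is a harmless additional detail the paper leaves implicit.
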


\noindent Since embeddings into disjoint unions of models are
bounded morphisms, the following is immediate from
Lemma~\ref{lem:bounded-morphism-game}:
\begin{lem}\label{lem:bisim-inv-disjoint}
  Every bisimulation-invariant formula is also invariant under
  disjoint union.
\end{lem}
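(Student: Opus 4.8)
The plan is to use the hint directly: the canonical embedding of a model into a disjoint union is a probabilistic bounded morphism, so Lemma~\ref{lem:bounded-morphism-game} pins its game distance to zero, after which bisimulation invariance forces equality of truth values.

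First I would fix a model $\CI$ and an arbitrary further model $\CJ$, and consider the inclusion $\iota\colon\CI\to\CI\sqcup\CJ$ of $\CI$ as a summand of the disjoint union. I would verify that $\iota$ is a bounded morphism: the atomic-concept clause $A^\CI = A^{\CI\sqcup\CJ}\circ\iota$ holds because the disjoint union inherits the atomic interpretations of its summands, and the transition clause $r_{\iota(a)}(B) = r_a(\iota^{-1}[B])$ holds because no successor mass of an $\CI$-state leaks into the $\CJ$-summand; in particular $a$ is blocking iff $\iota(a)$ is.

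Next I would apply Lemma~\ref{lem:bounded-morphism-game} to conclude $d^G(a,\iota(a)) = 0$ for every $a\in\DCI$, the distance being computed across the two models in their disjoint union as usual. Letting $\phi$ be a bisimulation-invariant formula, i.e.\ one whose induced state property is non-expansive wrt.\ $d^G$, non-expansiveness then yields
\begin{equation*}
|\phi^\CI(a) - \phi^{\CI\sqcup\CJ}(\iota(a))| \le d^G(a,\iota(a)) = 0,
\end{equation*}
so $\phi^\CI(a) = \phi^{\CI\sqcup\CJ}(\iota(a))$ for all $a\in\DCI$. Since $\iota(a)$ is exactly the copy of $a$ in the $\CI$-summand, this says the value of $\phi$ at a state is unchanged on passing to a larger disjoint union — i.e.\ $\phi$ is invariant under disjoint union.

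The argument is immediate once the two cited facts are in place, so I do not anticipate a real obstacle. The only step needing any care is confirming that $\iota$ meets the transition clause of the bounded-morphism definition — that forming a disjoint union adds neither new successor probability nor new blocking states to the original summand — and this is a routine check. Everything else is handed to us by Lemma~\ref{lem:bounded-morphism-game} together with the definition of bisimulation invariance.
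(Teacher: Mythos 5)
Your proof is correct and follows exactly the paper's route: the paper also observes that the inclusion into a disjoint union is a bounded morphism and derives the claim immediately from Lemma~\ref{lem:bounded-morphism-game} together with non-expansiveness wrt.\ $d^G$. No gaps; the routine check of the transition clause for $\iota$ is the only content, and you handle it.
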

\noindent We are now in a position to prove our desired locality
result:
\begin{lem}[Locality]\label{lem:bisim-inv-local}
  Let $\phi(x)$ be a bisimulation-invariant $\FOLP$-formula of rank
  $n$ with one free variable~$x$. Then $\phi$ is $k$-local for
  $k = 3^n$.
\end{lem}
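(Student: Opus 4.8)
The plan is to obtain locality from a Gaifman-style analysis of the Ehrenfeucht--Fra\"iss\'e game, feeding the outcome into Lemma~\ref{lem:ef-inv-fol}, and to use disjoint-union invariance (Lemma~\ref{lem:bisim-inv-disjoint}) to supply Duplicator with enough room. To show $\phi^\CI(a)=\phi^{\CI^k_a}(a)$ for $k=3^n$, I would first pad both sides with countably many disjoint copies of~$\CI$, setting $\hat\CI=\CI\uplus\biguplus_{i\in\Nat}\CI$ and $\hat\CJ=\CI^k_a\uplus\biguplus_{i\in\Nat}\CI$, with $a$ the distinguished state of the first summand in each case. By Lemma~\ref{lem:bisim-inv-disjoint} this changes nothing, i.e.\ $\phi^{\hat\CI}(a)=\phi^\CI(a)$ and $\phi^{\hat\CJ}(a)=\phi^{\CI^k_a}(a)$. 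Since $\qr(\phi)\le n$, it then suffices by Lemma~\ref{lem:ef-inv-fol} to show that Duplicator wins the $n$-round Ehrenfeucht--Fra\"iss\'e game on $(\hat\CI,a)$ and $(\hat\CJ,a)$.

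Duplicator plays the standard shrinking-neighbourhood strategy, maintaining after round~$i$ a partition of the pebbles into \emph{inner} ones, matched by an isomorphism between their radius-$3^{n-i}$ Gaifman neighbourhoods inside the two distinguished summands, and \emph{outer} ones, matched between freshly chosen copies of~$\CI$ in the padding. The factor~$3$ is what the triangle inequality forces: a new pebble within $2\cdot 3^{n-i-1}$ of an inner pebble is itself called inner and answered through the local isomorphism, whereas one farther than this from every inner pebble is called outer and answered in a fresh copy, so that the two radius-$3^{n-i-1}$ balls are disjoint; after $n$ rounds the radius is~$3^0=1$ and an isomorphism of $1$-neighbourhoods is a partial isomorphism. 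The boundary of $\CI^k_a$ is handled by the same bookkeeping: $\CI$ and $\CI^k_a$ agree on everything within Gaifman distance $<k$ of~$a$, and the factor-$3$ spacing keeps every inner neighbourhood within distance $<k$ of~$a$ once the radius has dropped below $3^n=k$ after the first round; any move reaching distance~$\ge k$ is automatically outer and is answered in a fresh copy, where the blocked boundary state's genuine behaviour is available.

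The place I expect the real work, and the genuinely new ingredient relative to classical Gaifman locality, is the \emph{probabilistic round}, in which Duplicator must answer a fuzzy set $\phi_A$ at index~$i$ by a fuzzy set $\phi_B$ with $\ev_{r_{a_i}}(\phi_A)=\ev_{r_{b_i}}(\phi_B)$ and then match Spoiler's chosen successor. The observation that unlocks this is that the neighbourhood isomorphisms maintained above are genuine model isomorphisms and hence \emph{measure preserving}, carrying $r_{a_i}$ to $r_{b_i}$; Duplicator can therefore transport $\phi_A$ along the isomorphism to obtain $\phi_B$, matching the expected values on the nose, and answer any successor by its isomorphic image, which has the same fuzzy value and the same positive transition probability. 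Because a probabilistic round advances only to a successor, i.e.\ by Gaifman distance at most~$1$, the radius can still be contracted from $3^{n-i}$ to $3^{n-i-1}$ without an inner neighbourhood crossing the boundary, so the invariant is restored. Establishing that this measure-preserving matching coexists with the inner/outer spacing near the blocked boundary is the main technical obstacle; once it is in place, Duplicator wins the $n$-round game, and Lemma~\ref{lem:ef-inv-fol} yields $\phi^{\hat\CI}(a)=\phi^{\hat\CJ}(a)$, hence $\phi^\CI(a)=\phi^{\CI^k_a}(a)$.
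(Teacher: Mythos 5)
Your route is the paper's route: reduce the claim to $\phi^{\CJ}(a)=\phi^{\CK}(a)$ for padded models via disjoint-union invariance (Lemma~\ref{lem:bisim-inv-disjoint}), then win the $n$-round Ehrenfeucht--Fra\"iss\'e game with the standard shrinking-neighbourhood invariant (radius $3^{n-i}$ after round $i$, threshold $2\cdot 3^{n-i-1}$ for the inner/outer split), handling probabilistic rounds by transporting Spoiler's fuzzy set along the maintained local isomorphism --- all of which matches the paper's argument, including your identification of the probabilistic round as the new ingredient. There is, however, one genuine gap: your padding $\hat\CI=\CI\uplus\biguplus_{i}\CI$ and $\hat\CJ=\CI^k_a\uplus\biguplus_{i}\CI$ uses fresh copies of $\CI$ only, whereas the paper pads \emph{both} sides with fresh copies of \emph{both} $\CI$ and $\CI^k_a$. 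The copies of $\CI^k_a$ are not optional.

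To see why, let Spoiler make an outer move $c$ inside the distinguished $\CI^k_a$-summand of $\hat\CJ$ at Gaifman distance $k$ from $a$ (this is farther than $2\cdot3^{n-1}$ from the only pebble $a$, hence outer), choosing $c$ to be a state that is blocking in $\CI^k_a$ but not in $\CI$. You propose to answer in a fresh copy of $\CI$, ``where the blocked boundary state's genuine behaviour is available'' --- but that is exactly the wrong thing to provide: Duplicator must match the \emph{blocked} behaviour of $c$, and $\hat\CI$ contains no summand in which that behaviour exists (take $\CI$ with no blocking states to rule out accidental matches elsewhere). Concretely, after Duplicator answers with the non-blocking twin $b$ in a copy of $\CI$, Spoiler opens a probabilistic round at that index with $\phi_A\equiv 1$ on the $\hat\CI$ side; then $\ev_{r_{b}}(\phi_A)=1$ while $\ev_{r_{c}}(\psi)=0$ for every $\psi$ because $r_{c}$ is the zero function, so Duplicator cannot move and loses (already in round two, so for every $n\ge 2$). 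The same problem blocks your boundary bookkeeping for inner pebbles whose radius-$3^{n-i}$ ball touches distance $k$. The fix is precisely the paper's construction: add $n$ fresh copies of both $\CI$ and $\CI^k_a$ to \emph{both} models, and answer an outer move by the identical state in a fresh copy of whichever of the two kinds of summand Spoiler played in; with that repair the rest of your argument goes through as in the paper.
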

\begin{proof}[Proof sketch]
  Let $a$ be a state in a model~$\CI$. We need to show
  $\phi^\CI(a) = \phi^{\CI^k_a}(a)$.  Construct models~ $\CJ,\CK$ that
  extend $\CI$ and~$\CI^k_a$, respectively, by adding $n$ disjoint
  copies of both $\CI$ and~$\CI^k_a$. We finish the proof by showing
  that
  \begin{equation*}
  \phi^\CI(a) = \phi^\CJ(a) = \phi^\CK(a) = \phi^{\CI^k_a}(a).
  \end{equation*}
  The first and third equality follow by bisimulation invariance
  of~$\phi$ (Lemma~\ref{lem:bisim-inv-disjoint}), and the second using
  Lemma~\ref{lem:ef-inv-fol}, by giving a winning invariant for~$D$ in
  the $n$-round Ehrenfeucht-Fra\"iss\'e game for $\CJ,a$ and
  $\CK,a$.  
\end{proof}

\section{Proof of the Main Result}
\label{sec:main}
\noindent Having established locality of bisimulation-invariant
first-order formulae and modal approximability of finite-depth
bisimulation-invariant properties, we now discharge the last remaining
steps in our programme: We show by means of an unravelling
construction that bisimulation-invariant first-order formulae are
already finite-depth bisimulation-invariant, and then conclude the
proof of our main result, the modal characterization theorem.
\begin{defn}
  Let $\CI$ be a model. The \emph{unravelling} $\CI^\ast$ of $\CI$ is
  a model with non-empty finite sequences $\bar a\in (\DCI)^+$ as
  states, where atomic concepts and roles are interpreted by
  \begin{equation*}
  A^{\CI^\ast}(\bar a) = A^\CI(\last(\bar a)) \qquad
  r^{\CI^\ast}(\bar a,\bar aa) = r^\CI(\last(\bar a),a),
  \end{equation*}
  for $\bar a \in (\DCI)^+$ and $a\in\DCI$, where $\last$ takes last
  elements.
\end{defn}

\noindent As usual, models are bisimilar to their unravellings:

\begin{lem}\label{lem:bisim-unravel}
  For any model $\CI$ and $a\in\DCI$, $D$ has
  a winning strategy in the $0$-bisimulation game for $\CI,a$ and 
  $\CI^\ast,a$.
\end{lem}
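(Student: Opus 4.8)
The plan is to reduce the claim to the behaviour-preservation properties of bounded morphisms. Concretely, I would exhibit the last-element map as a bounded morphism $\CI^\ast\to\CI$ and then lift it, through its \emph{graph coupling}, to an explicit winning strategy for~$D$ in the (unbounded) $0$-bisimulation game.

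First I would verify that $\last\colon\CI^\ast\to\CI$, sending each sequence $\bar a$ to its final state $\last(\bar a)$, is a bounded morphism. Agreement on atomic concepts is immediate from the clause $A^{\CI^\ast}(\bar a)=A^\CI(\last(\bar a))$. For the transition condition, I would note that the successors of $\bar a$ in $\CI^\ast$ are exactly the extensions $\bar aa$ with $r^\CI(\last(\bar a),a)>0$, each carrying weight $r^\CI(\last(\bar a),a)$, and that $\last(\bar aa)=a$; hence $\last$ pushes $\pi_{\bar a}$ forward to $\pi_{\last(\bar a)}$, which is exactly the requirement $r_{\last(\bar a)}(B)=r_{\bar a}(\last^{-1}[B])$. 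In particular $\bar a$ is blocking iff $\last(\bar a)$ is.

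Next I would describe~$D$'s strategy, maintaining the invariant that every configuration has the form $(c,\bar a,0)$ with $c=\last(\bar a)$; this holds at the start because the initial state of $\CI^\ast$ is the singleton sequence whose last element is~$a$. From such a configuration, $D$ plays the graph coupling
\begin{equation*}
\mu(c',\bar a')=
\begin{cases}
\pi_{\bar a}(\bar a') & \text{if } c'=\last(\bar a'),\\
0 & \text{otherwise,}
\end{cases}
\end{equation*}
which is a genuine element of $\cpl(\pi_c,\pi_{\bar a})$ precisely because $\last$ pushes $\pi_{\bar a}$ forward onto $\pi_c=\pi_{\last(\bar a)}$. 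As the available deviation is~$0$, $D$ distributes it as the constant~$0$ (legal since $\ev_\mu(0)=0$), so that whichever pair $(c',\bar a')$ with $\mu(c',\bar a')>0$ that~$S$ selects again satisfies $c'=\last(\bar a')$ and yields the configuration $(c',\bar a',0)$, preserving the invariant.

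Finally I would check that the invariant forces a~$D$-win: before each round the winning condition $|A^\CI(c)-A^{\CI^\ast}(\bar a)|\le 0$ holds with equality (both sides equal $A^\CI(\last(\bar a))$), and by the blocking equivalence the two states are either both blocking, an immediate win for~$D$, or both non-blocking, so the ``exactly one blocking'' case never occurs and the deviation never reaches~$1$. Thus~$D$ either wins at a pair of blocking states or sustains an infinite play, winning the unbounded game either way. The one delicate point---the part I expect to be the real content rather than Lemma~\ref{lem:bounded-morphism-game} read off the shelf---is that the argument must go through at deviation \emph{exactly}~$0$, not merely in the limit giving $d^G=0$; it is the exactness of the graph coupling, i.e.\ the on-the-nose matching of the two successor distributions, that supplies this.
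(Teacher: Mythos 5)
Your proposal is correct and matches the paper's own proof in substance: both maintain the invariant that configurations have the form $(\bar a,\last(\bar a),0)$ and have~$D$ play exactly the graph coupling $\mu(\bar aa,a)=\pi_{\bar a}(\bar aa)=\pi_{\last(\bar a)}(a)$ with the constant deviation~$0$. Your framing via the bounded-morphism property of $\last$ (and your remark that the point is winning at deviation \emph{exactly}~$0$, which the paper's proof of Lemma~\ref{lem:bounded-morphism-game} in fact also establishes) is just a tidier packaging of the same argument.
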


\noindent We next show that locality and bisimulation invariance imply
finite-depth bisimulation invariance:
\begin{lem}\label{lem:local-k-bisim-inv}
  Let $\phi$ be bisimulation invariant and $k$-local.
  Then $\phi$ is depth-$k$ bisimulation invariant.
\end{lem}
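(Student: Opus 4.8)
The plan is to reduce the unbounded game distance $d^G$ to the depth-$k$ game distance $d^G_k$ by passing to tree-shaped models, where a round of the bisimulation game descends by exactly one level. Fix states $a$ and $b$ in models $\CI$ and $\CJ$; we must show $|\phi^\CI(a)-\phi^\CJ(b)|\le d^G_k(a,b)$. First I would replace $\CI,\CJ$ by their unravellings: by Lemma~\ref{lem:bisim-unravel} we have $d^G(a,a_{\CI^\ast})=0$ (and likewise for $b$), so bisimulation invariance of~$\phi$ yields $\phi^\CI(a)=\phi^{\CI^\ast}(a)$ and $\phi^\CJ(b)=\phi^{\CJ^\ast}(b)$. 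Applying $k$-locality of~$\phi$ to the unravelled models then gives $\phi^{\CI^\ast}(a)=\phi^{(\CI^\ast)^k_a}(a)$ and $\phi^{\CJ^\ast}(b)=\phi^{(\CJ^\ast)^k_b}(b)$. Hence it suffices to bound $|\phi^{(\CI^\ast)^k_a}(a)-\phi^{(\CJ^\ast)^k_b}(b)|$, and by bisimulation invariance this is at most $d^G(a,b)$ computed in the two truncated unravellings.

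The heart of the argument is the claim that \emph{in a depth-$k$ truncated unravelling the unbounded bisimulation game coincides with the $k$-round game}. In $\CI^\ast$ the Gaifman distance from the root to a state equals its sequence length minus one, so truncating at radius~$k$ yields a tree of depth~$k$ all of whose depth-$k$ states are blocking; moreover each round of the bisimulation game appends exactly one element and hence descends exactly one level. Consequently every play reaches depth-$k$ (hence blocking) states after at most $k$ rounds, whereupon Duplicator wins by the both-blocking clause, and no play can enter a $(k+1)$-st round. The two games therefore have identical plays and outcomes — both check the winning condition before rounds $1,\dots,k$ and award the win to Duplicator thereafter — so $d^G=d^G_k$ between the roots of the truncated unravellings. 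I expect this coincidence to be the main obstacle, and it is exactly where unravelling is essential: in a model with cycles the radius-$k$ neighbourhood need not bound the number of rounds, so the unbounded game could keep probing behaviour beyond depth~$k$ and the coincidence would fail.

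It remains to transfer this depth-$k$ distance back to $\CI,\CJ$. Since $d^G_k=d^W_k$ is a pseudometric (Theorem~\ref{thm:modal-approx}) and $d^G_k\le d^G$, Lemma~\ref{lem:bisim-unravel} gives $d^G_k(a_\CI,a_{\CI^\ast})=0$ and Lemma~\ref{lem:nbhood-bisim} applied to $\CI^\ast$ gives $d^G_k(a_{\CI^\ast},a_{(\CI^\ast)^k_a})=0$; the triangle inequality then yields $d^G_k(a_{(\CI^\ast)^k_a},a_\CI)=0$, and symmetrically for~$b$. A final use of the triangle inequality shows that $d^G_k(a,b)$ taken in the truncated unravellings equals $d^G_k(a,b)$ taken in $\CI,\CJ$. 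Chaining the (in)equalities assembled above then gives $|\phi^\CI(a)-\phi^\CJ(b)|\le d^G_k(a,b)$, which is precisely depth-$k$ bisimulation invariance of~$\phi$.
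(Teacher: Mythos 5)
Your proposal is correct and follows essentially the same route as the paper's proof: unravel, truncate to radius $k$, use the triangle inequality for the pseudometric $d^G_k$ together with Lemmas~\ref{lem:bisim-unravel} and~\ref{lem:nbhood-bisim} to transfer the depth-$k$ distance to the truncated unravellings, and observe that on depth-$k$ trees every play of the bisimulation game is forced into a both-blocking configuration after $k$ rounds, so the $k$-round and unbounded games coincide there. The only cosmetic difference is that you package this last step as the equality $d^G=d^G_k$ on the truncated unravellings, whereas the paper only uses the one direction it needs (a $k$-round winning strategy for Duplicator is also an unbounded-game winning strategy).
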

\begin{proof}[Proof sketch]
  By unravelling (Lemma~\ref{lem:bisim-unravel}) and loc\-ality
  (Lemma~\ref{lem:nbhood-bisim}), we need only consider depth-$k$ tree
  models. On such models, winning strategies in $k$-round bisimulation
  games automatically win also the unrestricted game.
\end{proof}
\noindent This allows us to wrap up the proof of our main result:
\begin{proof}[Proof of Theorem~\ref{thm:van-benthem}]
  Let $\phi$ be a probabilistic first-order formula of rank $n$. By
  Lemma~\ref{lem:bisim-inv-local} and
  Lemma~\ref{lem:local-k-bisim-inv}, $\phi$ is depth-$k$
  bisimulation-invariant for $k = 3^n$.  By
  Theorem~\ref{thm:modal-approx}, for every $\epsilon>0$, there exists
  an $\ALCP$ concept $C_\epsilon$ of rank at most $k$ such that
  $\supnorm{\phi^\mathcal{F}-C^\mathcal{F}_\epsilon}\le\epsilon$ on
  the final model $\mathcal{F}$. By Lemma~\ref{lem:uniform-approx},
  this approximation works over all models.
\end{proof}

\section{Conclusions}\label{sec:concl}

\noindent We have established a modal characterization result for a
probabilistic fuzzy DL~$\ALCP$, stating that every formula of
quantitative probabilistic FOL that is \emph{bisimulation-invariant},
i.e.\ non-expansive wrt.\ a natural notion of behavioural distance,
can be approximated by $\ALCP$-concepts of bounded modal rank, the
bound being exponential in the rank of the original formula. As
discussed in the introduction, the bound on the modal rank is the
crucial feature making this result into a van-Benthem (rather than
Hennessy-Milner) type theorem.

It remains open whether our main result can be sharpened to make do
without approximation. (Similar open problems persist for the case of
fuzzy modal logic~\cite{WildEA18} and two-valued probabilistic modal
logic~\cite{SchroderEA17}.) Further directions for future research
include 
a treatment of \L{}ukasiewicz semantics of the propositional
connectives (for which non-expansiveness in fact fails). Moreover, the
version of our main result that restricts the semantics to finite
models, in analogy to Rosen's finite-model version of van Benthem's
theorem~\cite{Rosen97}, remains open.

\newpage
\bibliographystyle{mynamed}
\interlinepenalty=10000
\bibliography{coalgml}

\newpage
\appendix
\section{Appendix}

\subsection{Coalgebraic Modelling}\label{sec:coalg}
\lsnote{May still need adaptation} Universal coalgebra~\cite{Rutten00}
serves as a generic framework for modelling state-based systems, with
the system type encapsulated as a set functor. Although we are only
concerned with a concrete system type in the present paper, we do need
coalgebraic methods to some degree. In particular, the requisite
background on behavioural
distances~\cite{BreugelWorrell05,bbkk:behavioral-metrics-functor-lifting}
is largely based on coalgebraic techniques, and moreover we will need
the final coalgebra at one point in the development. We require only
basic definitions, which we recapitulate here and then instantiate to
the case of our notion of model.

Recall first that a set functor~$F:\Set\to\Set$ consists of an
assignment of a set~$FX$ to every set~$X$ and a map $Ff:FX\to FY$ to
every map $f:X\to Y$, preserving identities and composition. The core
example of a functor for the present purposes is the
\emph{distribution functor}~$\dfun$, which assigns to a set $X$ the
set $\dfun X$ of discrete probability measures on~$X$, and to a map
$f:X\to Y$ the map $\dfun f:\dfun X\to\dfun Y$ that takes image
measures; explicitly, $\dfun f(\mu)$ is the image measure of~$\mu$
along~$f$, given by $\dfun f(\mu)(A)=\mu(f^{-1}[A])$. Functors can be
combined by taking \emph{products} and \emph{sums}: Given set functors
$F,G:\Set\to\Set$, the set functors $F\times G,F+G:\Set\to\Set$ are
given by $(F\times G)X=FX\times GX$ and $(F+G)X=FX+GX$, respectively,
with the evident action on maps in both cases; here, $+$ denotes
disjoint union as usual. Every set~$C$ induces a \emph{constant
  functor}, also denoted~$C$ and given by $CX=C$ and $Cf=\id_C$ for
every set~$X$ and every map~$f$. Moreover, the \emph{identity
  functor}~$\id$ is given by $\id\, X=X$ and $\id\, f=f$ for all 
  sets~$X$
and all maps~$f$.

An \emph{$F$-coalgebra} $(A,\xi)$ for a set functor~$F$ consists of a
set~$X$ of \emph{states} and a \emph{transition map} $\xi:A\to FA$,
thought of as assigning to each state $a\in A$ a structured collection
$\xi(a)$ of successors.  A $\dfun$-coalgebra $(A,\xi)$, for instance,
is just a Markov chain: its transition map $\xi:A\to\dfun A$ assigns
to each state a distribution over successor states. Similarly, models
in the sense defined above are coalgebras $(A,\xi)$ for the set
functor $[0,1]^{\CN}\times(\dfun+1)$: If $\xi(a)=(f,\pi)$, then
$f:\CN\to[0,1]$ determines the truth values of the atomic concepts
at the state~$a$, and $\pi$ is either a discrete probability measure
determining the successors of~$a$ or a designated value denoting
termination. The probabilistic transition systems considered by van
Breugel and Worrell \cite{BreugelWorrell05}, which indexes
probabilistic transition relations over a set~$\Act$ of actions and
moreover uses unrestricted subdistributions, corresponds to coalgebras
$(A,\xi)$ for the set functor $\dfun(\id+1)^\Act$ -- given a state~$a$
and an action $c\in\Act$, $\xi(a)(c)\in\dfun(A+1)$ is a
subdistribution over successor states of~$a$, with the summand~$1$
serving to absorb the weight missing to obtain total
weight~$1$.\bknote{Actually another difference is that van
  Breugel/Worrell work in a non-discrete setting in measurable
  spaces.}

A \emph{morphism} $f:(A,\xi)\to(B,\zeta)$ between $F$-coalgebras
$(A,\xi)$ and $(B,\zeta)$ is a map $f:A\to B$ such that
\begin{equation*}
Ff(\xi(a)) = \zeta(f(a)) 
\end{equation*}
for all states $a\in A$. Morphisms should be thought of as
behaviour-preserving maps or functional bisimulations. E.g.\
$f:A\to B$ is a morphism of $\dfun$-coalgebras (i.e.\ Markov chains)
$(A,\xi)$ and $(B,\zeta)$ if for each set $Y\subseteq B$ and each
state $a\in A$,
\begin{equation*}
\zeta(f(a))(Y)=\xi(a)(f^{-1}[Y]),
\end{equation*}
i.e.\ the probability of reaching $Y$ from $f(a)$ is the same as that
of reaching $f^{-1}[Y]$ from~$a$. Morphisms of probabilistic
transition systems, viewed as coalgebras, satisfy a similar condition
for the successor distributions, and additionally preserve the truth
values of atomic concepts.

An $F$-coalgebra $(Z,\zeta)$ is \emph{final} if for every
$F$-coalgebra $(A,\xi)$ there exists exactly one morphism
$(A,\xi)\to(Z,\zeta)$. Final coalgebras are unique up to isomorphism
if they exist, and should be thought of as having as states all
possible behaviours of states in $F$-coalgebras. For our present
purposes, we do not need an explicit description of the final
coalgebra; it suffices to know that since the functor describing
probabilistic transition systems is \emph{accessible} (more precisely
$\omega_1$-accessible), a final coalgebra for it, i.e.\ a final
probabilistic transition system, exists~\cite{Barr93}. 

\subsection{Omitted Proofs}

\subsubsection{Proof of Lemma~\ref{lem:kr-duality}}

We make use of the following version of the Kantorovich-Rubinstein
duality~\cite[Proposition 11.8.1]{dudley2002}:
\begin{lem}[Kantorovich-Rubinstein duality] 
\label{lem:kr-duality-cited} Let
  $(X,d)$ be a separable metric space, and let $\mathcal{P}_1(X)$
  denote the space of probability measures
  $\mu\colon\mathcal{B}(X) \to [0,1]$ on the Borel $\sigma$-algebra
  $\mathcal{B}(X)$ such that
  $\textstyle{\int} d(x,\,\cdot\,) \dif\mu < \infty$ for some $x\in
  X$. Then for $\mu_1,\mu_2\in\mathcal{P}_1(X)$,
  \begin{equation*}
  d^\uparrow(\mu_1,\mu_2) = d^\downarrow(\mu_1,\mu_2).
  \end{equation*}
\end{lem}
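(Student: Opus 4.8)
The plan is to prove the two inequalities $d^\uparrow\le d^\downarrow$ and $d^\downarrow\le d^\uparrow$ separately; only the latter uses duality proper. Throughout I write $d^\uparrow(\mu_1,\mu_2)=\bigvee_f\bigl(\ev_{\mu_1}(f)-\ev_{\mu_2}(f)\bigr)$, the join ranging over non-expansive (i.e.\ $1$-Lipschitz) functions $f\colon X\to\Real$; the absolute value in the definition may be dropped because non-expansive functions are closed under negation, and all integrals are finite since $\mu_1,\mu_2\in\mathcal{P}_1(X)$. I also use that both $d^\uparrow$ and $d^\downarrow$ are pseudometrics on $\mathcal{P}_1(X)$: for $d^\uparrow$ the triangle inequality is immediate, and for $d^\downarrow$ it follows from the standard gluing lemma for couplings. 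The inequality $d^\uparrow\le d^\downarrow$ is then elementary: for any coupling $\mu\in\cpl(\mu_1,\mu_2)$ and any non-expansive~$f$,
\begin{equation*}
  \ev_{\mu_1}(f)-\ev_{\mu_2}(f)=\int_{X\times X}\bigl(f(x)-f(y)\bigr)\dif\mu(x,y)\le\int_{X\times X} d(x,y)\dif\mu(x,y)=\ev_\mu(d),
\end{equation*}
using $f(x)-f(y)\le d(x,y)$ and that $\mu_1,\mu_2$ are the marginals of~$\mu$; taking the supremum over~$f$ and the infimum over~$\mu$ gives $d^\uparrow\le d^\downarrow$.

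For the reverse inequality I would first settle the case of a \emph{finite} subspace $S=\{z_1,\dots,z_m\}\subseteq X$ and measures supported on~$S$. Here $\bigwedge_\mu\ev_\mu(d)$ is the finite linear program of minimizing $\sum_{i,j}\mu_{ij}\,d(z_i,z_j)$ subject to $\mu_{ij}\ge 0$ and the two marginal constraints; its LP dual is $\max\{\sum_i f_i\,(\mu_1-\mu_2)(z_i)\mid f_i-f_j\le d(z_i,z_j)\text{ for all }i,j\}$, i.e.\ the supremum over functions $f\colon S\to\Real$ that are non-expansive on~$S$. As the primal is feasible (take the product coupling) and its objective is bounded below by~$0$, strong LP duality applies, there is no duality gap, and $d^\downarrow=d^\uparrow$ on~$S$. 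By the McShane--Whitney extension theorem every function non-expansive on~$S$ extends to a non-expansive function on~$X$, so this finite-space value of~$d^\uparrow$ coincides with the ambient one for measures supported on~$S$.

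To reach a general separable $(X,d)$ I would approximate. Using a countable dense set and the induced partition of~$X$ into the pieces $B(z_i,\delta)\setminus\bigcup_{j<i}B(z_j,\delta)$, map each point to its representative~$z_i$; the resulting coupling witnesses $d^\downarrow(\mu_\ell,\mu_\ell^\delta)\le\delta$ for the pushforward measures $\mu_\ell^\delta$ ($\ell=1,2$). These are supported on a countable set, and the finite-first-moment hypothesis lets me discard the tail, moving its vanishing mass to a fixed base point at vanishing cost, to obtain finitely supported $\mu_\ell^{(k)}$ with $d^\downarrow(\mu_\ell,\mu_\ell^{(k)})\to 0$ and hence, by the easy direction, $d^\uparrow(\mu_\ell,\mu_\ell^{(k)})\to 0$ as well. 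Chaining the triangle inequalities for both pseudometrics with the finite-case equality on the support of the $\mu_\ell^{(k)}$,
\begin{equation*}
  d^\downarrow(\mu_1,\mu_2)\le d^\downarrow(\mu_1,\mu_1^{(k)})+d^\uparrow(\mu_1^{(k)},\mu_2^{(k)})+d^\downarrow(\mu_2^{(k)},\mu_2)\le d^\uparrow(\mu_1,\mu_2)+o(1),
\end{equation*}
and letting $k\to\infty$ yields $d^\downarrow\le d^\uparrow$. The main obstacle is exactly this passage to the limit: a separable metric space need not be complete, so its measures need not be tight and its bounded sets need not be totally bounded, and optimal couplings therefore cannot be produced by a compactness argument; instead one must rely entirely on the $\mathcal{P}_1$-hypothesis to bound the transport cost of the tail mass, and on the extension theorem to keep the finite-space and ambient values of~$d^\uparrow$ aligned.
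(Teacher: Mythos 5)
Your argument is essentially correct, but note first that the paper does not prove Lemma~\ref{lem:kr-duality-cited} at all: it is imported as a black box from \cite[Proposition~11.8.1]{dudley2002}, and the paper's own proof effort goes exclusively into the transfer from metric to pseudometric spaces via the metric quotient (this is how Lemma~\ref{lem:kr-duality} is derived). What you have written is therefore a self-contained replacement for the citation, along the standard textbook route: the easy inequality $d^\uparrow\le d^\downarrow$ by integrating $f(x)-f(y)\le d(x,y)$ against a coupling; strong duality for the finite transportation LP, where merging the two dual potentials into a single $1$-Lipschitz function silently uses symmetry and the triangle inequality of~$d$ (worth one explicit sentence); McShane--Whitney extension to identify the finite-subspace and ambient values of $d^\uparrow$; and a discretization that uses separability for the $\delta$-net partition and the $\mathcal{P}_1$ hypothesis to move tail mass to a base point at vanishing cost. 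The final chaining is sound, and your closing remark correctly identifies why no compactness argument for optimal couplings is available on a merely separable space.

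Two points should be made explicit to close small gaps. First, the ``standard gluing lemma'' behind the triangle inequality for $d^\downarrow$ is genuinely delicate on a separable but non-Polish space, where disintegration may fail; in your chain, however, every glue point is one of the finitely supported measures $\mu_\ell^{(k)}$, so gluing amounts to conditioning on atoms and is elementary --- say this, since otherwise the appeal is to a lemma usually stated only for Polish spaces. Second, you prove the duality for real-valued $1$-Lipschitz test functions, which matches Dudley's formulation, whereas Definition~\ref{def:wasserstein-kantorovich} takes the supremum defining $d^\uparrow$ over $[0,1]$-valued non-expansive maps; for an unbounded metric the two versions differ (the $[0,1]$-valued supremum is capped at~$1$ while $d^\downarrow$ is not), so the lemma meshes with the paper's definition only because all pseudometrics in the paper are bounded by~$1$, in which case an optimal Kantorovich potential has oscillation at most the diameter and can be shifted into $[0,1]$. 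State which version you are proving and add the one-line shift/truncation argument reconciling the two.
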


Essentially, we only need to transfer this version of 
Kantorovich-Rubinstein duality to the slightly more general case of 
pseudometrics.

First, note that the relation $x\sim y :\iff d(x,y) = 0$ is an 
equivalence relation on $X$. The quotient set $Y := X/{\sim}$ is made 
into a metric space $(Y,d')$, the \emph{metric quotient} of $(X,d)$, 
by taking $d'([x],[y])=d(x,y)$. Let $p\colon A \to B$ be the 
projection map. By construction, $p$ is an
isometry. Both the Kantorovich and the Wasserstein lifting preserve
isometries~\cite{bbkk:behavioral-metrics-functor-lifting}, so for
all discrete probability measures $\mu_1,\mu_2$ on $X$,
\begin{align*}
d^{\uparrow}(\mu_1,\mu_2) 
& = (d')^{\uparrow}((\dfun p)\mu_1,(\dfun p)\mu_2) \\
& = (d')^{\downarrow}((\dfun p)\mu_1,(\dfun p)\mu_2) \\
& = d^{\downarrow}(\mu_1,\mu_2).
\end{align*}
In the second step we have applied Lemma~\ref{lem:kr-duality-cited} 
to the metric space $(Y,d')$, noting that every discrete probability 
measure can be defined on the Borel $\sigma$-algebra.

\subsubsection{Proof of Lemma~\ref{lem:metrics-equal-GW}.}

Induction over~$n$. The base case $n = 0$ is clear: the $0$-round
game is an immediate win for~$D$, so $d^G_0 = d^W_0 = 0$. We proceed
with the inductive step from~$n$ to~$n+1$.

So let~$a$ and~$b$ be states in a model $\CI$. If $a$ and $b$ are both
blocking, then $d^G_{n+1}(a,b) = d^W_{n+1}(a,b) = 0$. If exactly
one of $a,b$ is blocking, then
$d^G_{n+1}(a,b) = d^W_{n+1}(a,b) = 1$. Now assume that both $a$ and
$b$ are non-blocking.

``$\ge$'': Let $d^G_{n+1}(a,b) \le \epsilon$, so $D$ wins the
$(n+1)$-round bisimulation game on $(a,b,\epsilon)$. We show that
$d^W_{n+1}(a,b) \le \epsilon$.
First, for every $A\in\CN$, $|A^\CI(a)-A^\CI(b)| \le \epsilon$ by the
winning condition.
Second, suppose $D$ chooses $\mu\in\cpl(r_a,r_b)$ and 
$\epsilon'\colon \DCI\times\DCI 
\rightarrow[0,1]$ in the first turn.
By assumption, $D$ wins the $n$-round bisimulation game on
$(a',b',\epsilon'(a',b'))$ for every $a',b'\in\DCI$, so
$d^W_n = d^G_n\le\epsilon'$ by induction, and thus $\ev_\mu (d^W_n) 
\le 
\ev_\mu (\epsilon') \le \epsilon$.

``$\le$'': Let $d^W_{n+1}(a,b) < \epsilon$. It suffices to give a
winning strategy for $D$ in the $(n+1)$-round bisimulation game on
$(a,b,\epsilon)$ (implying~$d^G_{n+1}(a,b)\le\epsilon$).  The
winning condition in the initial configuration follows immediately
from the assumption.  Also by the assumption, there exists
$\mu\in\cpl(r_a,r_b)$ such that $\ev_\mu (d^W_n) < \epsilon$.
As $r_a$ and $r_b$ are discrete, the set
\[ R := \{(a',b') \mid r_a(a') > 0 \text{ and } r_b(b') > 0 \}
\]
is countable; so we can write $R = \{(a_1,b_1),(a_2,b_2),\dots\}$.
Now put $\delta = \epsilon - \ev_\mu (d^W_n)$ and define
\[\epsilon'(a_i,b_i) = d^W_n(a_i,b_i) + 2^{-i}\delta \] for
$(a_i,b_i)\in R$ and $\epsilon'(a',b') = 0$ for $(a',b') \notin
R$. Then
\[ \ev_\mu (\epsilon') \le \ev_\mu (d^W_n) + \delta = \epsilon, \]
so playing $\mu$ and $\epsilon'$ constitutes a legal move for $D$.
Now, since $\mu\in\cpl(r_a,r_b)$, $\mu(a',b') = 0$ for all
$(a',b')\notin R$. This means that $S$ must pick some $(a_i,b_i) \in 
R$. Then
\[ d^G_n(a_i,b_i) = d^W_n(a_i,b_i) < \epsilon'(a_i,b_i), \]
so $D$ wins the $n$-round game on $(a_i,b_i,\epsilon'(a_i,b_i))$.

\subsubsection{Proof of Lemma~\ref{lem:diamond-nonexp}.}

Let $\supnorm{f-g}\le\epsilon$; we have to show
$\supnorm{\Diamond f - \Diamond g}\le\epsilon$. So let $a\in
\DCI$; then
\[
|(\Diamond f)(a) - (\Diamond g)(a)|
= \intsuc{a}{f-g}
\le \intsuc{a}{\epsilon} \le \epsilon,
\]
as required.

\subsubsection{Proof of Theorem~\ref{thm:modal-approx}.}

We proceed by simultaneous induction on $n$.

In the base case $n=0$, all the behavioural distances are the zero 
pseudometric: $d^G_0 = 0$ because by the rules of the game each 
$0$-round game is an immediate win for $D$; $d^W_0 = d^K_0 = 0$ by 
definition; and $d^L_0 = 0$ because each rank-$0$ concept is a 
propositional combination of truth constants and therefore constant.
Total boundedness follows directly from the fact that under the zero 
pseudometric every $\epsilon$-ball is the entire space, regardless of 
$\epsilon$. Finally, the density claim follows because non-expansive 
maps under the zero pseudometric are just constants in $[0,1]$ and 
the syntax of $\ALCP$ includes truth constants $q\in\RatI$.

For the inductive step, let $\CI$ be a model and $n > 0$, and assume
as the inductive hypothesis that all claims in
Theorem~\ref{thm:modal-approx} hold for all $n' < n$.  We begin with
Item~\ref{item:metrics-equal}:
\begin{itemize}[wide]
  \item $d^G_n = d^W_n$ is Lemma~\ref{lem:metrics-equal-GW}.
  \item $d^W_n = d^K_n$ follows by Kantorovich-Rubinstein duality
  (Lemma~\ref{lem:kr-duality}), since every totally bounded
  pseudometric space is separable.
  \item $d^K_n = d^L_n$:
  Let $a,b\in\DCI$ and consider the map
  \[
  G\colon\nonexpI{\DCI,d_{n-1}}\to[0,1], \quad
  f \mapsto |(\Diamond f)(a) - (\Diamond f)(b)|,
  \]
  Then~$G$ is a continuous function because all of its constituents 
  are continuous (in particular, $\Diamond$ is continuous by
  Lemma~\ref{lem:diamond-nonexp}).
  
  By the induction hypothesis, and because density is preserved by
  continuous maps, $G[\modf{n-1}]$ is a dense subset of 
  $G[\nonexpI{\DCI,d_{n-1}}]$. Thus,
  \begin{align*}
  d^K_n & (a,b) \\
  & = \bigvee_{A\in\CN} |A^\CI(a)-A^\CI(b)| \vee \bigvee 
  G[\nonexpI{\DCI,d_{n-1}}] \\
  & = \bigvee_{A\in\CN} |A^\CI(a)-A^\CI(b)| \vee \bigvee 
  G[\modf{n-1}] \\
  & = \bigvee_{A\in\CN} |A^\CI(a)-A^\CI(b)| \vee
  \bigvee_{\mathclap{\rk C\le n-1}} |(\Diamond C)^\CI(a)-(\Diamond 
  C)^\CI(b)| 
  \\
  & = \bigvee_{\rk C\le n} |C^\CI(a)-C^\CI(b)| = d^L_n(a,b).
  \end{align*}
  To prove the penultimate step, we first note that ``$\le$'' follows
  immediately. To see ``$\ge$'', we proceed by induction over the
  propositional combinations of atomic concepts $A\in\CN$ and 
  concepts $\Diamond C$, where $C\in\modf{n-1}$, using that for any 
  concepts $C,D$ and $q\in\RatI$:  
  \begin{align*}
  |(C\ominus q)^\CI(a)-(C\ominus q)^\CI(b)| & \le |C^\CI(a)-C^\CI(b)| 
  \\
  |(\neg C)^\CI(a)-(\neg C)^\CI(b)| & = |C^\CI(a)-C^\CI(b)| \\
  |(C\land D)^\CI(a)-(C\land D)^\CI(b)| & \\
  & \hspace{-2.3cm} \le \max(|C^\CI(a)-C^\CI(b)|, 
  |D^\CI(a)-D^\CI(b)|).
  \end{align*}
\end{itemize}
  
  \noindent \emph{Item~\ref{item:tot-bounded}}:
  We make use of the following version of the Arzel\`a-Ascoli 
  theorem~\cite{WildEA18}
  where function spaces are restricted to non-expansive functions
  instead of the more general continuous functions, but the 
  underlying spaces are only required to be totally bounded instead 
  of compact:
  
  \begin{lem}[Arzel\`a-Ascoli for totally bounded
    spaces]\label{lem:arzela-ascoli}
    Let $(X,d)$ be a totally bounded pseudometric space. Then
    the space $\nonexpI{X,d}$, equipped with the supremum
    pseudometric, is totally bounded.
  \end{lem}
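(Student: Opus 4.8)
The plan is to prove total boundedness directly from the definition: for every $\epsilon > 0$ I will exhibit a finite $\epsilon$-cover of $\nonexpI{X,d}$ in the supremum pseudometric. The key observation that makes this version of Arzel\`a-Ascoli go through under the weaker hypothesis of mere total boundedness is that non-expansiveness supplies \emph{uniform} equicontinuity for free, with the identity as modulus; no separate equicontinuity assumption is needed, and the classical argument simplifies accordingly.

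First I would fix $\epsilon > 0$ and set $\delta = \eta = \epsilon/4$. Using total boundedness of $(X,d)$, choose a finite $\delta$-cover $x_1,\dots,x_n \in X$, so that every $x \in X$ lies within distance~$\delta$ of some $x_i$. Since the codomain $[0,1]$ is itself totally bounded, also choose a finite $\eta$-net $v_1,\dots,v_m \in [0,1]$. Next, assign to each $f \in \nonexpI{X,d}$ a \emph{fingerprint}: for each~$i$ pick an index $j(i)$ with $|f(x_i) - v_{j(i)}| \le \eta$, yielding a tuple in $\{1,\dots,m\}^n$. There are at most $m^n$ possible fingerprints, so the fingerprint map partitions $\nonexpI{X,d}$ into finitely many classes, and within a single class any two functions $f,g$ satisfy $|f(x_i) - g(x_i)| \le 2\eta$ at every net point~$x_i$.

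The heart of the argument is transferring this closeness from the finitely many net points to all of~$X$ via non-expansiveness. For arbitrary $x \in X$, pick $x_i$ with $d(x,x_i) \le \delta$; then
\[
|f(x) - g(x)| \le |f(x) - f(x_i)| + |f(x_i) - g(x_i)| + |g(x_i) - g(x)| \le \delta + 2\eta + \delta = \epsilon.
\]
Taking the supremum over~$x$ gives $\supnorm{f - g} \le \epsilon$, so each class has $d_\infty$-diameter at most~$\epsilon$. Choosing one representative $f_j$ from each non-empty class then produces finitely many functions in $\nonexpI{X,d}$ whose closed balls $\ball{d}{\epsilon}{f_j}$ cover the whole space, which is exactly total boundedness.

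I expect the only delicate point to be the bookkeeping of two \emph{independent} discretizations — of the domain (the $\delta$-cover) and of the codomain (the $\eta$-net) — together with checking that the cover centres genuinely lie in $\nonexpI{X,d}$; this is why I draw representatives from the fingerprint classes rather than forming artificial grid-valued functions, which need not be non-expansive. The pseudometric (rather than metric) setting causes no trouble, since the triangle-inequality estimate never relies on separation of points.
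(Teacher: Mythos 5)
Your proof is correct. The paper itself does not prove this lemma but imports it from \cite{WildEA18}; your argument is the standard one for this setting --- a finite $\delta$-cover of the domain, a finite $\eta$-net of $[0,1]$, fingerprint classes of at most $m^n$ types, and non-expansiveness serving as a built-in uniform modulus of equicontinuity to propagate closeness from the net points to all of $X$ --- and your choice of class representatives (rather than grid-valued functions) correctly ensures the cover centres lie in $\nonexpI{X,d}$.
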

  
  By Lemma~\ref{lem:arzela-ascoli}, applied to the inductive 
  hypothesis, we know that the space
  $\nonexpI{\DCI,d_{n-1}}$ is totally bounded wrt.~the supremum
  pseudometric.

Let $\epsilon > 0$. As $\modf{n-1}$ is dense in 
$\nonexpI{\DCI,d_{n-1}}$, there exist
finitely many $C_1,\dots,C_m\in\modf{n-1}$ such that
\[ \bigcup_{i=1}^m \ball{}{\frac{\epsilon}{8}}{C_i} =
\nonexpI{\DCI,d_{n-1}} \]

From these concepts, together with the atomic concepts 
$A_1,\dots,A_k$, we can construct the map
\begin{align*}
I\colon \DCI & \to [0,1]^{k+m} \\
a & \mapsto
(A_1^\CI(a),\dots,A_k^\CI(a),
(\Diamond C_1)^\CI(a),\dots,(\Diamond C_m)^\CI(a)).
\end{align*}

Note that we assume here that the set of atomic concepts is a finite 
set $\CN = \{A_1,\dots,A_k\}$. This is without loss of generality for 
the modal characterization theorem, because every formula of $\FOLP$ 
can only contain finitely many propositional atoms, so $\CN$ can be 
restricted to just those atoms.

It turns out that $I$ is an $\frac{\epsilon}{4}$-isometry, that is
\[ |d_n(a,b) - \supnorm{I(a) - I(b)}| \le \tfrac{\epsilon}{4} \]
for all $a,b\in \DCI$.
Thus, by the triangle inequality, we can take 
preimages to turn a finite $\frac{\epsilon}{4}$-cover of 
$[0,1]^{k+m}$ (a compact, hence totally bounded space) into a finite 
$\epsilon$-cover of $(\DCI,d_n)$.
  
  \emph{Item~\ref{item:modal-approx}}: We make use of the following
  Stone-Weierstra\ss{} theorem ~\cite{WildEA18} (again in a version 
  for totally bounded spaces and non-expansive maps):
  
  \begin{lem}[Stone-Weierstra\ss{} for totally bounded spaces]
    \label{lem:stone-weierstrass}
    Let $(X,d)$ be a totally bounded pseudometric space, and let $L$ 
    be
    a subset of $\nonexpI{X,d}$ such that
    $f_1,f_2 \in L$ implies $\min(f_1,f_2),\max(f_1,f_2) \in L$.
    Then $L$ is dense in $\nonexpI{X,d}$ if each $f\in \nonexpI{X,d}$ 
    can be approximated at each pair of points by functions in~$L$; 
    that is for all $\epsilon>0$ and all $x_1,x_2\in X$ there exists 
    $g\in L$ such that
    \[ \max(|f(x_1)-g(x_1)|,|f(x_2)-g(x_2)|) \le\epsilon. \]
  \end{lem}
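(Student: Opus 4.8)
The plan is to run the classical lattice form of the Stone--Weierstra\ss{} theorem (the Kakutani--Krein argument), letting total boundedness of $(X,d)$ take over the role of compactness and non-expansiveness of the functions involved take over the role of (uniform) continuity. Fix $f\in\nonexpI{X,d}$ and $\epsilon>0$; it suffices to produce $g\in L$ with $\supnorm{f-g}\le\epsilon$. Put $\delta=\epsilon/3$ and, using total boundedness, fix a finite $\delta$-net $x_1,\dots,x_N\in X$, so that every $x\in X$ satisfies $d(x,x_k)\le\delta$ for some~$k$.

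First I would invoke the two-point approximation hypothesis for our fixed~$f$: for each ordered pair of net points $(x_k,x_l)$ it yields some $g_{kl}\in L$ with $|f(x_k)-g_{kl}(x_k)|\le\delta$ and $|f(x_l)-g_{kl}(x_l)|\le\delta$. I then assemble a candidate by taking minima and then a maximum (min-then-max): set $h_k:=\min_{1\le l\le N}g_{kl}$ and $g:=\max_{1\le k\le N}h_k$. Since $L$ is closed under binary $\min$ and $\max$, finite minima and maxima of its elements again lie in~$L$, whence $g\in L$.

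The verification then uses only non-expansiveness together with the net. For the upper bound, $h_k(x_l)\le g_{kl}(x_l)\le f(x_l)+\delta$ at every net point $x_l$; transporting this to an arbitrary~$x$ via the net point $x_l$ nearest to~$x$ and using that $h_k$ and $f$ are $1$-Lipschitz gives $h_k\le f+3\delta$ everywhere, hence $g\le f+3\delta$. For the lower bound, $h_k(x_k)\ge f(x_k)-\delta$ because each $g_{kl}(x_k)\ge f(x_k)-\delta$; for arbitrary~$x$ I choose the nearest net point $x_k$ and estimate $g(x)\ge h_k(x)\ge h_k(x_k)-\delta\ge f(x_k)-2\delta\ge f(x)-3\delta$, again via $1$-Lipschitzness of $h_k$ and~$f$. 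Combining the two bounds yields $\supnorm{f-g}\le 3\delta=\epsilon$, as required.

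I expect the only genuinely load-bearing step to be the transfer from control at the finitely many net points to control at every point of~$X$: this is exactly where the argument departs from the compact case, since instead of building open neighbourhoods from continuity I exploit the uniform modulus of the non-expansive maps $h_k,f$ to pay a fixed $\delta$ per jump across the net. The remaining content is the correct ordering of the lattice operations (minimum over~$l$ to secure the upper bound, then maximum over~$k$ to recover the lower bound) and the routine $\delta$-bookkeeping through the triangle inequality.
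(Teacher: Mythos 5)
Your proof is correct. Note that the paper does not prove this lemma itself but imports it from the earlier work \cite{WildEA18}; your argument is the standard Kakutani--Krein lattice form of Stone--Weierstra\ss{} with the finite $\delta$-net and the non-expansiveness of $f$ and of $h_k=\min_l g_{kl}$ replacing the open-cover compactness argument, which is exactly the adaptation the cited reference makes, and your min-then-max assembly and $3\delta$ bookkeeping check out.
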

  
  We apply Lemma~\ref{lem:stone-weierstrass} to $(\DCI,d_n)$ with
  $L := \modf{n}$. Clearly $L$ is closed under $\min$ and $\max$ so, 
  to finish the proof, it suffices to give, for each $\epsilon>0$, 
  each non-expansive map $f\in\nonexpI{\DCI,d_n}$ and each pair of
  states $a,b\in\DCI$ a concept $C\in\modf{n}$ such that
  \begin{equation*}
  \max(|f(a)-C^\CI(a)|, |f(b)-C^\CI(b)|) \le \epsilon.
  \end{equation*}
  To construct such a~$C$, we note that $|f(a)-f(b)|\le d^L_n(a,b)$
  (by non-expansiveness), so there exists some $D\in\modf{n}$ such
  that $|D^\CI(a)-D^\CI(b)|\ge |f(a)-f(b)|-\epsilon$. From~$D$, we can
  construct~$C$ using truncated subtraction~$\ominus$.
  
\subsubsection{Proof of Lemma~\ref{lem:bounded-morphism-game}.}

We show that $D$ wins the bisimulation game for $(a_0,f(a_0),0)$ by
maintaining the invariant that the current configuration is of the
form $(a,b,0)$ with $b = f(a)$,
which 
ensures that the winning condition always holds. It remains to show
that $D$ can maintain the invariant.

In each round, $D$ begins by picking $\mu(a',b') = r_a(a')$ if
$b'=f(a')$ and $0$ otherwise, and $\epsilon' = 0$.
We can see that $\mu\in\cpl(r_a,r_b)$, because 
\begin{equation*}
\textstyle \sum_{b'\in \DCJ}\mu(a',b') = r_a(a')
\end{equation*}
and
\begin{equation*}
\textstyle\sum_{a'\in \DCI}\mu(a',b') = \sum_{f(a')=b'} r_a(a') = 
r_b(b')
\end{equation*}
for all $a'\in \DCI$ and $b'\in \DCJ$. Also, clearly 
$\ev_\mu(\epsilon') = 0$. Now any choice by $S$ leads to another 
configuration $(a',b',0)$ with $b'=f(a')$.

\subsubsection{Proof of Lemma~\ref{lem:uniform-approx}.}

Let $\CI$ be a model, and let $h\colon\CI\to\CF$ be the unique
morphism. Let $a\in\DCI$. Then $d^G(a,h(a)) = 0$ by
Lemma~\ref{lem:bounded-morphism-game}, and thus
$\phi^\CI(a) = \phi^\CF(h(a))$ and $\psi^\CI(a) = \psi^\CF(h(a))$ by
bisimulation invariance. So
\begin{align*}
\supnorm{\phi-\psi}^\CI & =
\textstyle\bigvee_{a\in \DCI} |\phi^\CI(a)-\psi^\CI(a)|\\& =
\textstyle\bigvee_{a\in \DCI} |\phi^\CF(h(a))-\psi^\CF(h(a))| \\&\le
\supnorm{\phi-\psi}^\CF. 
\end{align*}

\subsubsection{Proof of Lemma~\ref{lem:nbhood-bisim}.}

Player~$D$ wins by maintaining the invariant that whenever $i$ rounds
have been played, the current configuration is of the form
$(a_i,a_i,0)$ for some $a_i\in \DCI$ with $D(a,a_i)\le i$.  For 
$i<k$, no configuration of this kind can be winning for $S$, because 
the two states in this configuration represent the same state in 
different models (recall that the winning conditions are not checked 
after the last round has been played).

It remains to give a strategy for $D$ that maintains the invariant. It
clearly holds at the start of the game, with $a_0 = a$. When the
$(i+1)$-th round is played, $D$ can pick
$\mu\in\cpl(r_{a_i},r_{a_i})$ and $\epsilon'\colon
\DCI\times\nbhood{k}{a} \to [0,1]$ as follows: \begin{align*} 
\mu(a',a'')
& = \begin{cases} \pi_{a_i}(a'), & \text{ if } a' = a'', \\ 0, &
\text{ otherwise}, \end{cases} \\ \epsilon'(a',a'') & = 0.
\end{align*} Clearly, $\ev_\mu(\epsilon') = 0$, so this is a legal
move. Now the new configuration chosen by $S$ necessarily satisfies
the invariant.

\subsubsection{Proof of Lemma~\ref{lem:ef-inv-fol}.}

We proceed by induction over formulae.
\begin{itemize}
  \item The cases $A(x_i)$ and $x_i=x_j$ (with $A\in\CN$) follow
  immediately from the fact that the initial configuration is a
  partial isomorphism.
  \item The Boolean cases ($q, \phi\ominus q, \neg\phi,
  \phi\land\psi$) follow directly by the inductive hypothesis.
  \item $\exists x.\,\phi$:
  Let $(\bar a,\bar b)$ be the current configuration.
  Let $\delta>0$, let $a$ be such that
  \begin{equation*}
  (\exists x.\,\phi)(\bar a) - \phi(\bar a a)< \delta,
  \end{equation*}
  and let $b$ be the winning answer for $D$ in reply to $S$
  choosing $a$. By induction, $\phi(\bar a a) = \phi(\bar b b)$,
  so
  \begin{equation*}
  (\exists x.\,\phi)(\bar b)\ge\phi(\bar b b) = \phi(\bar a a)
  >(\exists x.\,\phi)(\bar a)-\delta.
  \end{equation*}
  Because $\delta>0$ was arbitrary, it follows that 
  $(\exists x.\,\phi)(\bar b)\ge(\exists x.\,\phi)(\bar a)$.
  We can symmetrically show that
  $(\exists x.\,\phi)(\bar a)\ge(\exists x.\,\phi)(\bar b)$, which
  proves this case.
  \item $\diabind{x_i}{x_{m+1}}{\phi}$: Let $(\bar a,\bar b)$ be the
  current configuration.  Suppose that $S$ picks the index $i$ and
  the fuzzy subset
  \begin{equation*}
  \phi_A\colon \DCI\to [0,1], \quad a \mapsto \phi^\CI(\bar a a)
  \end{equation*}
  and $D$'s winning reply is $\psi_B\colon \DCJ\to [0,1]$. We show
  that on the support of $r_{b_i}$, $\psi_B$ must be equal to
  \begin{equation*}
  \phi_B\colon \DCJ\to [0,1], \quad b \mapsto \phi^\CJ(\bar b b).
  \end{equation*}
  Suppose there exists some $b\in \DCJ$ with $r(b_i,b)>0$ and
  $\phi_B(b) \neq \psi_B(b)$. Then $D$ has a winning reply
  $a\in \DCI$ in case $S$ picks this $b$, which means, by the rules
  of the game, that $r(a_i,a)>0$ and $\phi_A(a) =
  \psi_B(b)$.
  However, it is also true that $\phi_A(a) = \phi_B(b)$, by the
  inductive hypothesis.  This is a contradiction.
  
  Now, because $\psi_B$ was a winning reply, we obtain
  \begin{align*}
  (\diabind{x_i}{x_{m+1}}{\phi})(\bar a) & =
  \intsuc{a_i}{\phi_A}\\& =
  \intsuc{b_i}{\psi_B}\\& =
  \intsuc{b_i}{\phi_B} \\&=
  (\diabind{x_i}{x_{m+1}}{\phi})(\bar b). 
  \end{align*}
\end{itemize}

\subsubsection{Proof of Lemma~\ref{lem:bisim-inv-local}.}

Let $a$ be a state in a model~$\CI$. We
need to show $\phi^\CI(a) = \phi^{\CI^k_a}(a)$.  Let $\CJ$ be a new
model that extends $\CI$ by adding $n$ disjoint copies of both $\CI$
and $\CI^k_a$. Let $\CK$ be the model that extends $\CI^k_a$
likewise.  We finish the proof by showing that
\begin{equation*}
\phi^\CI(a) = \phi^\CJ(a) = \phi^\CK(a) = \phi^{\CI^k_a}(a).
\end{equation*}
The first and third equality follow by bisimulation invariance of
$\phi$ (Lemma~\ref{lem:bisim-inv-disjoint}).
The second equality follows by Ehrenfeucht-Fra\"iss\'e invariance
(Lemma~\ref{lem:ef-inv-fol}) once we show that $D$ has a winning
strategy in the $n$-round Ehrenfeucht-Fra\"iss\'e game for $\CJ,a$
and $\CK,a$.

Such a winning strategy can be described as follows: For
$\bar a = (a_1,\dots,a_n)$, put
$\nbhood{k}{\bar a} = \bigcup_{i\le n}\nbhood{k}{a_i}$. Then $D$
maintains the invariant that, if the configuration reached after $i$
rounds is $(\bar b,\bar c)$, then there exists an isomorphism $f_i$
between $\nbhood{k_i}{\bar b}$ and $\nbhood{k_i}{\bar c}$ that maps
each $b_j$ to the corresponding $c_j$, where $k_i = 3^{n-i}$.

The invariant holds at the start of the game, because the
neighbourhoods on both sides are just $\nbhood{k}{a}$.
Similarly, whenever the invariant holds, the current configuration
is a partial isomorphism by restriction of the given isomorphism to
the two vectors of the configuration.

Now we consider what happens during the rounds. Suppose that $i$
rounds have been played, and the current configuration is $(\bar
b,\bar c)$. If $S$ decides to play a standard round, playing some
$b\in \DCJ$, then there are two cases:
\begin{itemize}
  \item $b\in \nbhood{2k_{i+1}}{\bar b}$:
  In this case, the radius-$k_{i+1}$ neighbourhood
  $\nbhood{k_{i+1}}{b}$ of $b$ is fully contained in the domain
  $\nbhood{k_i}{\bar b}$ of $f_i$ -- this follows by the triangle
  inequality, as $2k_{i+1} + k_{i+1} = 3k_{i+1} = k_i$. Now $D$
  can just reply with $c := f_i(b)$, and an isomorphism $f_{i+1}$
  between $\nbhood{k_{i+1}}{\bar bb}$ and $\nbhood{k_{i+1}}{\bar
    cc}$ is formed by restricting the domain and codomain of $f_i$
  appropriately.
  \item $b\notin \nbhood{2k_{i+1}}{\bar b}$:
  In this case, the radius-$k_{i+1}$ neighbourhoods
  $\nbhood{k_{i+1}}{b}$ of $b$ and $\nbhood{k_{i+1}}{\bar b}$ of
  $\bar b$ do not intersect -- this too follows from the triangle
  inequality.
  Now $D$ can pick a fresh copy of $\CI$ or $\CI^k_a$ in $\CK$
  (depending on which kind of copy $b$ lies in); her reply $c$ is
  then just $b$ in that copy. Here, a fresh copy is one that was
  never visited on any of the previous rounds. By construction of
  $\CJ$ and $\CK$, such a copy is always available.
  This means that we now have two isomorphisms, one between
  $\nbhood{k_{i+1}}{\bar b}$ and $\nbhood{k_{i+1}}{\bar c}$ (by
  restriction of $f_i$), and one between
  $\nbhood{k_{i+1}}{b}$ and $\nbhood{k_{i+1}}{c}$ (by isomorphism
  of the respective copies of $\CI$ or $\CI^k_a$). Because these
  isomorphisms have disjoint domains and codomains, we can combine
  them to form the desired isomorphism $f_{i+1}$.
\end{itemize}
If $S$ plays a standard round with some $c\in \Delta^\CK$ instead, the
same argument applies.

Finally, if $S$ starts a probabilistic round by picking an index
$0\le j\le i$ and playing some $\phi_B\colon \DCJ\to [0,1]$, then we
first note that, by the rules of the game, the support of $\phi_B$
must be contained in $\nbhood{1}{\bar b}$, which in turn must be
contained in the domain of $f_i$. This means that $D$ can construct
$\phi_C\colon \Delta^\CK\to [0,1]$ by mapping along $f_i$, 
i.e.~$\phi_C(c) =
\phi_B(f_i^{-1}(c))$ for all successors $c$ of $c_j$, and
$\phi_C(c) = 0$ otherwise. Now, whichever $b$ or $c$ is picked by
$S$, $D$ can just reply with $c:=f_i(b)$ or $b:=f_i^{-1}(c)$ and
$f_{i+1}$ is formed as in the first case of a standard round.
Again, the same argument applies if $S$ picks a fuzzy subset
$\phi_C$ on the other side.

\subsubsection{Proof of Lemma~\ref{lem:bisim-unravel}.}

$D$ wins by maintaining the invariant that the configuration of the
game is of the form $(\bar a,\last(\bar a),0)$ for some $\bar a\in 
(\DCI)^+$.
To do so, she can put $\mu(\bar aa,a) = \pi_{\bar a}(\bar aa) =
\pi_{\last(\bar a)}(a)$ for all $a\in (\DCI)^+$, all other values of
$\mu$ are $0$, and $\epsilon' = 0$. Then any move by $S$ leads to a
configuration where the invariant holds.

\subsubsection{Proof of Lemma~\ref{lem:local-k-bisim-inv}.}

Let $\CI$ and $\CJ$ be two models and
let $a\in \DCI$ and $b\in \DCJ$ be two states such that
$d_k^G(a,b)<\epsilon$. It is enough to show that
$|\phi^\CI(a)-\phi^\CJ(b)|\le\epsilon$.

We denote by $a'$ and $a''$ the copies of $a$ in $\CI^\ast$ and
$(\CI^\ast)^k_a$, respectively. Similarly, $b'$ and $b''$ denote the
copies of $b$ in $\CJ^\ast$ and $(\CJ^\ast)^k_b$.
By Lemma~\ref{lem:bisim-unravel}, $D$ wins the
$0$-bisimulation-game for $\CI,a$ and $\CI^\ast,a'$ (similarly for
$\CJ$) and by Lemma~\ref{lem:nbhood-bisim}, she also wins the
$k$-round $0$-bisimulation game for $\CI^\ast,a'$ and
$(\CI^\ast)^k_a,a''$ (similarly for $\CJ$).
Because behavioural distance $d^G_k$ is a pseudometric, this
means that
\begin{align*}
&d^G_k(a'',b'')\\& \le
d^G_k(a'',a')+d^G_k(a',a)\\&\qquad +d^G_k(a,b)+d^G_k(b,b')+d^G_k(b',b'')\\
& =d^G_k(a,b)<\epsilon,
\end{align*}
so $D$ has a winning strategy in the $k$-round
$\epsilon$-bisimulation game for $(\CI^\ast)^k_a,a''$ and
$(\CJ^\ast)^k_b,b''$.

In both $(\CI^\ast)^k_a,a''$ and $(\CJ^\ast)^k_b,b''$, the reachable
states form a tree of depth at most $k$. This implies that, after
$i$ rounds of the game, the two states on either side of the current
configuration are nodes at distance $i$ from the root of their
respective tree. Thus, whenever $k$ rounds have been played in the
game, $S$ does not have a legal move in the next round, because at
that point, both nodes in the configuration are necessarily leaves
and thus blocking.
This in turn means that if $D$ can win the $k$-round game, she
also wins the unbounded game, so, by bisimulation invariance of
$\phi$, 
$|\phi^{(\CI^\ast)^k_a}(a'')-\phi^{(\CJ^\ast)^k_b}(b'')|\le\epsilon$.

By locality and bisimulation invariance of $\phi$, and again
Lemma~\ref{lem:bisim-unravel}, we have
$\phi^{(\CI^\ast)^k_a}(a'') = \phi^{\CI^\ast}(a') =
\phi^\CI(a)$
as well as
$\phi^{(\CJ^\ast)^k_b}(b'') = \phi^{\CJ^\ast}(b') =
\phi^\CJ(b)$.
Thus $|\phi^\CI(a)-\phi^\CJ(b)|\le\epsilon$, as claimed.

\end{document}